% ------------------------------------------------------------------------
% This is a journal top-matter template file for use with AMS-LaTeX.
%%%%%%%%%%%%%%%%%%%%%%%%%%%%%%%%%%%%%%%%%%%%%%%%%%%%%%%%%%%%%%%%%%%%%%%%%%

% ------------------------------------------------------------------------
% AMS-LaTeX Paper ********************************************************
% ------------------------------------------------------------------------
% Submitted:
% Final Version:
% Accepted:
% ------------------------------------------------------------------------
% This is a journal top-matter template file for use with AMS-LaTeX.
%%%%%%%%%%%%%%%%%%%%%%%%%%%%%%%%%%%%%%%%%%%%%%%%%%%%%%%%%%%%%%%%%%%%%%%%%%

\documentclass[11pt]{article}
\usepackage{amsmath, amssymb, amsthm}
\usepackage{graphicx}

\usepackage{mathrsfs}

%%%%%%%%%%%%%%%%%%%%%%%%%%%%%%%%%%%%%%%%%%%%%%%%%%%%%%%%%%%
\renewcommand{\baselinestretch}{1.25}

\textwidth=6.5 in \textheight=9 in \hoffset=-.7 in \voffset=-.5 in

% THEOREM Environments ---------------------------------------------------
 \newtheorem{thm}{Theorem}[section]
 \newtheorem{cor}[thm]{Corollary}

 \newtheorem{obs}[thm]{Observation}
 \theoremstyle{definition}
 \newtheorem{defn}[thm]{Definition}
 \newtheorem{exmp}[thm]{Example}

 \theoremstyle{remark}

\numberwithin{equation}{section}

% TITLE PAGE STYLE -------------------------------------------------
 \newcommand{\address}[1]
 { \vspace{-2em}\begin{center}
  \footnotesize{#1}
   \end{center}}

  \newcommand{\email}[1]
   {\vspace{-2em} \begin{center}
   \footnotesize{{\it E-mail address:} \texttt{#1}}
   \end{center}}

% MATH -------------------------------------------------------------------
 %\DeclareMathOperator{\RE}{Re}

% \newcommand{\BOP}{\mathbs{B}}
 
 \newcommand{\Real}{\mathbb{R}}

 \newcommand{\set}[1]{\left\{#1\right\}}
 \newcommand{\Set}[2]{\set{#1\ \vert\ #2}}

%%%%%%% -------------------------------------------------------

%%%%%%%%%%%%%%%% CROSS SYMBOLS %%%%%%%%%%%%%%%%%%%%%%%%%%%%%%

%%% ----------------------------------------------------------------------

%%%%%%%%%%%%%%%%%%TEST MATERIALS%%%%%%%%%%%

%\resizebox{2in}{1in}{$\times$}

%\scalebox{4}{Mad}

%%% ---------------------------------------------------------------

%%%%%%%%%%%%%%%%%%%%%%%%%%%%%%%%%%%%%%%%%%%%%%%%%%%%%%%%%%

\newcommand{\BOX}[1]{\texttt{box}\,(#1)}

%%%%%%%%%%%%%%%%%%%%%%%%%% TITLE ETC %%%%%%%%%%%%%%%%%%%%%%%%

\title{\Large\bf Ferrers Dimension and Boxicity}

\author{Soumyottam Chatterjee$^{*}$ and Shamik Ghosh$^{+}$}

%\date{}

%\commby{}

\begin{document}

%%% ---------------------------------------------------------------
\maketitle

\vspace{-0.5em}
\address{$^{*}$Department of Electronics and Tele-Communications Engineering and $^{+}$Department of Mathematics,\\
Jadavpur University,
Kolkata - 700 032, India.}

\vspace{0.25em}
\email{$^{*}$soumyottamchatterjee@gmail.com, $^{+}$sghosh@math.jdvu.ac.in}

%%%%%%%%%%%%%%%%%%%%%%%%%%%%%%%%%%%%%%%%%%%%%%%%%%%%%%%%%%%%%%%%%%

\renewcommand{\baselinestretch}{1}
\begin{abstract}
This note explores the relation between the boxicity of undirected graphs and the Ferrers dimension of digraphs.

\vspace{0.5em}\noindent
{\footnotesize {\bf Keywords:}\ \ Interval Graph, Ferrers digraph, Ferrers dimension, Boxicity.}
\end{abstract}

%%% ------------------------------------------------------------------

\renewcommand{\baselinestretch}{1.25}

%%% ------------------------------------------------------------------

\section{Introduction}

An undirected graph $G = (V, E)$ is an {\em interval graph} if and only if it is the intersection graph of a family of intervals on the real line. Each vertex is assigned an interval and two vertices are adjacent if and only if their corresponding intervals intersect. Motivated by theoretical as well as practical considerations, graph theorists have tried to generalize the concept of interval graphs in many ways. In many cases, representation of a graph as the intersection graph of a family of geometric objects, which are generalizations of intervals is sought. An example is the concept of boxicity introduced by F. S. Roberts in 1969 \cite{F}. For a graph $G$, its {\em boxicity} $\BOX{G}$ is the minimum positive integer $b$ such that $G$ can be represented as the intersection graph of axis-parallel b-dimensional boxes. Here a b-dimensional box is a Cartesian product $I_1\times I_2\times\cdots\times I_b$ where each $I_i$ is a closed interval on the real line. The boxicity of a complete graph may be assumed to be zero and since a one-dimensional box is a closed interval on the real line, graphs of boxicity at most 1 are exactly the interval graphs.

\vspace{1 em} Introduced independently by Guttman \cite{G} and Riguet \cite{R}, a {\em Ferrers digraph} $D = (V, E)$ is a directed graph (in short, digraph) whose successor sets are linearly ordered by inclusion, where the successor set of $v \in V$ is its set of out-neighbors $\Set{u \in V}{vu \in E}$. It is easy to see that the successor sets are linearly ordered by inclusion if and only if the analogously defined predecessor sets are linearly ordered by inclusion, and that both are equivalent to the transformability of the adjacency matrix by independent row and column permutations to a $(0,1)$-matrix in which the 1's are clustered in a corner in the shape of a Ferrers diagram (hence the term `Ferrers digraph'). It is well-known that every digraph $D$ is the intersection of a finite number of Ferrers digraphs and the minimum such number is its {\em Ferrers dimension}. It is known \cite{R} that a digraph D is a Ferrers digraph if and only if its adjacency matrix does not contain any $2\times 2$ permutation matrix:
$$\left(
\begin{array}{cc}
1 & 0\\
0 & 1
\end{array}
\right)
\qquad \textrm{ or }\qquad
\left(
\begin{array}{cc}
0 & 1\\
1 & 0
\end{array}
\right).$$
The digraphs of Ferrers dimension at most 2 were characterized by Cogis \cite{C}. He called every $2\times 2$ permutation matrix a {\em couple} and defined an undirected graph $H(D)$, the graph {\em associated to a digraph} $D$ whose vertices correspond to the 0's of its adjacency matrix with two such vertices joined by an edge if and only if the corresponding 0's belong to a couple. Cogis \cite{C} proved that $D$ is of Ferrers dimension at most 2 if and only if $H(D)$ is bipartite. In the general case, if $d_F(D) = n$, then there exist Ferrers digraphs $F_i$, $i = 1, 2, \ldots , n$, such that $D$ can be expressed as $D = F_1 \cap F_2 \cap \cdots \cap F_n$. Zeros belonging to any particular $F_i$ do not form any couple among themselves and consequently form an independent set in $H(D)$. Thus $\chi(H(D)) \leq d_F(D)$ where $\chi(H(D))$ is the chromatic number of $H(D)$. No instance has been found yet for which the inequality is a strict one and it is not known whether $\chi(H(D)) = d_F(D)$ for all digraphs $D$. But from the above inequality, it follows that $d_F(D)\geqslant n$ whenever $H(D)$ contains $K_n$. In fact, $d_F(D)=n$ if $H(D)=K_n$, as $d_F(D)$ cannot exceed the number of $0$'s of the adjacency matrix of $D$.

\vspace{1em} Let $G=(V,E)$ be a graph (directed or undirected). We denote the adjacency matrix of $G$ by $A(G)$. For convenience, an entry of $A(G)$ corresponding to, say, the vertex $u_i\in V$ in the row and the vertex $v_j\in V$ in the column will be denoted by, simply, $u_iv_j$. The graph whose adjacency matrix is obtained by interchanging $0$'s and $1$'s of $A(G)$ will be denoted by $\overline{G}$. Note that if $G$ has loops at all vertices (i.e., all principal diagonal elements of $A(G)$ are $1$), then $\overline{G}$ is a graph without loops (i.e., all principal diagonal elements of $A(\overline{G})$ are $0$) and vice-versa. Again for a digraph $D$ with adjacency matrix $A(D)$, we denote the digraph whose adjacency matrix is $A(D)^T$ (the transpose of the matrix $A(D)$) by $D^T$.

\vspace{1em} Now we explore some nice relations between Ferrers digraphs and interval graphs. A digraph $D=(V,E)$ is {\em oriented} if every arc of $D$ has a unique direction (i.e., $uv\in E\Longrightarrow vu\notin E$ for any $u,v\in V$). An oriented digraph $D=(V,E)$ is {\em transitively oriented} if $a,b,c\in V,\ ab,bc\in E\Longrightarrow ac\in E$. An undirected graph $G=(V,E)$ is {\em transitively orientable} if each edge of $G$ can be assigned a one-way direction in such a way that the resulting digraph is transitively oriented. We call this digraph as a {\em transitive orientation} of $G$. A transitive digraph $D=(V,E)$ without loop at any vertex is an {\em interval order} digraph if $a,b,x,y\in V,\ ax, by\in E\Longrightarrow ay \textrm{ or } bx \in E$. The class of interval order digraphs are transitive digraphs $D$ such that $\overline{D}\cap\overline{D}^T$ are interval graphs \cite{Fi} or equivalently, loopless Ferrers digraphs \cite{P}. Also if $F$ is a Ferrers digraph without loops, then $\overline{F}$ is a Ferrers digraph with loop at every vertex. Further we know that an undirected graph is an interval graph if and only if it does not contain $C_4$ (the cycle of length $4$) as an induced subgraph and its complement (called {\em co-interval} graph) is transitively orientable.\cite{A} Finally since every orientation of $\overline{C_4}=2K_2$ is isomorphic to $D_1$ (cf. Figure \ref{fig:d1}), we have the following observations:

\begin{obs}\label{obs:alpha}
Let $I$ be an undirected graph (with loop at every vertex). Then $I$ is an interval graph if and only if there exists a Ferrers digraph $F$ (with loop at every vertex) such that $I=F\cap F^T$.
\end{obs}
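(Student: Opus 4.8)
The plan is to deduce the statement from the two characterisations quoted above, namely that the interval order digraphs are exactly the loopless Ferrers digraphs, and that they are also exactly the transitive digraphs $D$ for which $\overline{D}\cap\overline{D}^T$ is an interval graph. The bridge between the undirected object $I$ and the directed object $F$ will be complementation: I would work throughout with the correspondence $F=\overline{D}$, $D=\overline{F}$, using that complementation is an involution which flips every diagonal entry (so it exchanges ``loopless'' with ``loop at every vertex'') and which, by the $2\times 2$ permutation submatrix criterion for Ferrers digraphs, sends Ferrers digraphs to Ferrers digraphs (under $0\leftrightarrow 1$ a $2\times 2$ permutation matrix is carried exactly to the other $2\times 2$ permutation matrix). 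The one computation I will need repeatedly is that for a loopless digraph $D$ the symmetric digraph $\overline{D}\cap\overline{D}^T$ has a loop at every vertex and has $ij$ as an edge precisely when neither $ij$ nor $ji$ is an arc of $D$; in other words $\overline{D}\cap\overline{D}^T$ is the complement-with-loops of the underlying undirected graph of $D$.

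For the forward direction, suppose $I$ is an interval graph. By the cited characterisation $I$ contains no induced $C_4$ and its complement $\overline{I}$ (which is loopless, since $I$ has a loop at every vertex) is transitively orientable. I would fix a transitive orientation $D$ of $\overline{I}$. Since $D$ is a loopless orientation of $\overline{I}$, the key computation gives $\overline{D}\cap\overline{D}^T=I$, which is an interval graph by hypothesis. Hence $D$ is a transitive digraph whose $\overline{D}\cap\overline{D}^T$ is interval, so $D$ is an interval order digraph and therefore a loopless Ferrers digraph. Setting $F=\overline{D}$ then yields a Ferrers digraph with a loop at every vertex, and $F\cap F^T=\overline{D}\cap\overline{D}^T=I$, as required.

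For the converse, suppose $F$ is a Ferrers digraph with a loop at every vertex and $I=F\cap F^T$. Put $D=\overline{F}$. Then $D$ is loopless and, since complementation preserves the Ferrers property, $D$ is a loopless Ferrers digraph, hence an interval order digraph. By the characterisation of interval order digraphs, $\overline{D}\cap\overline{D}^T$ is an interval graph; but $\overline{D}=F$, so $\overline{D}\cap\overline{D}^T=F\cap F^T=I$, and $I$ is an interval graph.

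The routine parts are the two bookkeeping facts (complementation preserves Ferrers-ness and exchanges the loop conventions) together with the edge-by-edge identity $\overline{D}\cap\overline{D}^T=I$; the only genuinely substantive ingredients are the quoted equivalences between interval order digraphs, loopless Ferrers digraphs, and transitive digraphs with interval $\overline{D}\cap\overline{D}^T$. The point to handle carefully is the forward direction's guarantee that a transitive orientation of $\overline{I}$ is actually an interval order digraph: I obtain this for free from the $\overline{D}\cap\overline{D}^T$ characterisation, but one could equally verify the interval order condition directly, using that $I$ has no induced $C_4$ (equivalently $\overline{I}$ has no induced $2K_2$) and that every orientation of $2K_2=\overline{C_4}$ is isomorphic to $D_1$, so that no forbidden configuration survives in $D$.
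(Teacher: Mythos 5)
Your proof is correct and takes essentially the same route as the paper, which offers no separate argument for Observation \ref{obs:alpha} but presents it as an immediate consequence of exactly the facts you assemble: the Fishburn--Prisner identification of interval order digraphs with loopless Ferrers digraphs and with transitive digraphs $D$ such that $\overline{D}\cap\overline{D}^T$ is an interval graph, the Ghouil\`{a}-Houri characterization of interval graphs via transitive orientability of the complement, and the fact that complementation preserves the Ferrers property while exchanging the loop conventions. Your explicit edge-by-edge verification that $\overline{D}\cap\overline{D}^T=I$ for an orientation $D$ of $\overline{I}$ simply makes precise the step the paper leaves implicit.
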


\begin{obs}\label{lem:beta}
A digraph without loop at any vertex is a Ferrers digraph if and only if it is transitively oriented and does not contain $D_1$ as an induced subdigraph.
\end{obs}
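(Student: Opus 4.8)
The plan is to route everything through Riguet's characterization quoted above: $D$ is Ferrers if and only if $A(D)$ contains no couple, i.e.\ no $2\times 2$ permutation submatrix. So both implications become statements about forcing or forbidding a couple $\left(\begin{smallmatrix}1&0\\0&1\end{smallmatrix}\right)$ or $\left(\begin{smallmatrix}0&1\\1&0\end{smallmatrix}\right)$ among suitable rows and columns of the adjacency matrix, and throughout I keep in mind that rows and columns are \emph{both} indexed by $V$, so a $2\times 2$ submatrix may reuse a vertex as a row and a column index.

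For the forward direction I would assume $D$ is a loopless Ferrers digraph and extract the three required properties, each by exhibiting a couple when the property fails. If $D$ were not oriented, say $uv,vu\in E$, then looplessness gives $uu=vv=0$ and the rows and columns indexed by $u,v$ display $\left(\begin{smallmatrix}0&1\\1&0\end{smallmatrix}\right)$. If transitivity failed, say $ab,bc\in E$ but $ac\notin E$, then the rows $a,b$ and columns $b,c$ give entries $ab=1,\ ac=0,\ bb=0$ (no loop), $bc=1$, again a couple $\left(\begin{smallmatrix}1&0\\0&1\end{smallmatrix}\right)$. Finally $D_1$ itself (the two disjoint arcs $ab$ and $cd$) displays $\left(\begin{smallmatrix}1&0\\0&1\end{smallmatrix}\right)$ on rows $a,c$ and columns $b,d$, so $D_1$ is not Ferrers; since a couple inside an induced subdigraph is a couple of the whole, the Ferrers property is inherited by induced subdigraphs and $D$ cannot contain $D_1$.

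For the converse I would assume $D$ is loopless, transitively oriented, and $D_1$-free, and suppose toward a contradiction that $A(D)$ contains a couple; relabelling the two chosen columns reduces this to rows $u_1,u_2$ and columns $v_1,v_2$ with $u_1v_1,u_2v_2\in E$ and $u_1v_2,u_2v_1\notin E$. The step I expect to be the main obstacle is precisely the bookkeeping caused by row/column indices coinciding, which I would dispose of first: if $u_1=v_2$ (resp.\ $u_2=v_1$) then the two arcs form a path, and one application of transitivity forces the non-arc $u_2v_1$ (resp.\ $u_1v_2$) to be present, a contradiction; if $u_1=v_2$ and $u_2=v_1$ simultaneously the two arcs form a $2$-cycle, contradicting orientedness. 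With the four vertices distinct, the remaining work is to show the induced subdigraph on $\{u_1,u_2,v_1,v_2\}$ is exactly $D_1$: orientedness removes the reverse arcs $v_1u_1,v_2u_2$, and each other potential arc is killed by a short transitivity argument—for example $v_1u_2\in E$ would yield $u_1\to v_1\to u_2\to v_2$ and hence the forbidden $u_1v_2$, and entirely analogous one- or two-step deductions eliminate $v_2u_1,u_1u_2,u_2u_1,v_1v_2,v_2v_1$. Thus the only arcs among the four are $u_1v_1$ and $u_2v_2$, producing an induced $D_1$ and contradicting the hypothesis, so $A(D)$ has no couple and $D$ is Ferrers.
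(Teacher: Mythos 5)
Your proof is correct, and it takes a genuinely different route from the paper's. The paper offers no direct argument for this observation: it derives the statement from cited results assembled in the preceding paragraph --- that loopless Ferrers digraphs coincide with interval order digraphs (transitive loopless digraphs satisfying $ax, by \in E \Rightarrow ay \in E$ or $bx \in E$), per Fishburn and Prisner, together with the fact that every orientation of $\overline{C_4} = 2K_2$ is isomorphic to $D_1$ --- so for the paper the only content is translating the interval-order axiom into ``no induced $D_1$'' in the presence of transitivity. Your argument makes exactly that translation explicit and self-contained via Riguet's couple criterion: the forward direction is a routine exhibition of couples (and your matrix for the transitivity step happens to remain a valid couple even in the degenerate case $a = c$, where it reduces to the anti-diagonal pattern, so the overlap bookkeeping is sound there too), while your converse supplies the step the paper leaves entirely implicit --- the disposal of the coincidences $u_1 = v_2$ and $u_2 = v_1$ by a single transitivity application, the $2$-cycle case by orientedness, and the one- or two-step eliminations of the six residual arcs showing that a couple must sit on four distinct vertices inducing precisely $D_1$. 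What your proof buys is independence from the interval-order literature and explicit verification of the row/column-coincidence cases, which are exactly where a careless argument would fail; what the paper's route buys is brevity. One further point worth recording: your converse argument also shows that the paper's remark immediately following the observation --- that a certain figure depicts a transitively oriented digraph with no induced $D_1$ that is not a Ferrers digraph --- cannot be correct as described, since by your proof any transitively oriented, loopless, non-Ferrers digraph necessarily contains an induced $D_1$; the displayed example must therefore either fail transitivity or contain an induced copy of $D_1$ that was overlooked.
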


\begin{figure}[h]
\begin{center}
\includegraphics*[scale=0.5]{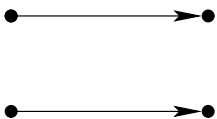}
\caption{The digraph $D_1$}\label{fig:d1}
\end{center}
\end{figure}

Note that $D_1$ itself is a transitively oriented digraph without loops  and the following is an example of an oriented digraph (without loops) which does not contain $D_1$ as an induced subdigraph, but it is not a Ferrers digraph as it is not transitively oriented:

\begin{figure}[h]
\begin{center}
\includegraphics*[scale=0.5]{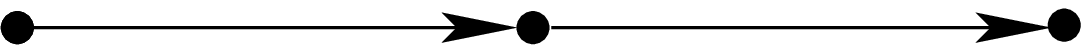}
%\caption{The digraph $D_1$}\label{fig:d1}
\end{center}
\end{figure}

\noindent Moreover the following digraph is transitively oriented and does not contain $D_1$ as an induced subdigraph, though it is not a Ferrers digraph.

\begin{figure}[h]
\begin{center}
\includegraphics*[scale=0.5]{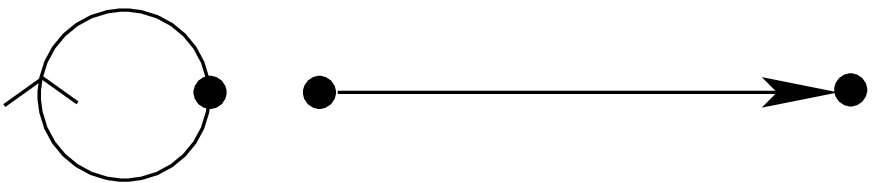}
\end{center}
\end{figure}

The following are some interesting consequences of the above observations:

\begin{cor}\label{cor:chi}
Every transitive orientation of a co-interval graph (without loops) is a Ferrers digraph.
\end{cor}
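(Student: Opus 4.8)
The plan is to reduce the statement to the characterization of loopless Ferrers digraphs supplied by Observation~\ref{lem:beta}. Let $G$ be a co-interval graph without loops and let $D$ be any transitive orientation of $G$ (such orientations exist, since $\overline{G}$ is an interval graph and hence $G$ is transitively orientable by the characterization quoted just above). By Observation~\ref{lem:beta}, a loopless digraph is a Ferrers digraph exactly when it is transitively oriented \emph{and} contains no induced copy of $D_1$. Since $D$ is a transitive orientation of $G$ by hypothesis, the transitivity condition holds automatically, so the whole burden of the proof rests on showing that $D$ has no induced $D_1$.

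To establish this $D_1$-freeness, I would argue by contradiction. Suppose $D$ contained $D_1$ as an induced subdigraph on some four vertices. Because $D_1$ is (the unique) orientation of $\overline{C_4}=2K_2$, the underlying undirected graph of this induced copy is precisely $2K_2$; hence the subgraph of $G$ induced on those four vertices is $2K_2$, i.e.\ $\overline{C_4}$. Passing to complements, the subgraph of $\overline{G}$ induced on the same four vertices is then $C_4$.

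At this point I would invoke the fact recorded in the excerpt that $\overline{G}$ is an interval graph (as $G$ is co-interval) and that interval graphs contain no induced $C_4$. This contradicts the induced $C_4$ produced in $\overline{G}$, so $D$ can contain no induced $D_1$. Combined with the transitivity assumption, Observation~\ref{lem:beta} then yields that $D$ is a Ferrers digraph, which is what we want.

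The argument is essentially a dictionary between the directed and undirected settings, so the only point demanding genuine care is the correspondence between an induced subdigraph of $D$ and the corresponding induced subgraphs of $G$ and of $\overline{G}$. The main (and really the only) obstacle is confirming that an induced $D_1$ in a transitive orientation indeed forces an induced $\overline{C_4}$ in the underlying graph; this is immediate once one uses the observation, noted immediately before the statement, that every orientation of $2K_2$ is isomorphic to $D_1$, so that the underlying graph of any induced $D_1$ is forced to be $2K_2$.
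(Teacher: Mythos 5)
Your proof is correct and follows exactly the route the paper intends: the corollary is stated there as an immediate consequence of Observation~\ref{lem:beta} together with the $C_4$-free characterization of interval graphs and the remark that every orientation of $\overline{C_4}=2K_2$ is isomorphic to $D_1$, which is precisely the chain you spell out. You have merely made explicit the dictionary (induced $D_1$ in $D$ $\Rightarrow$ induced $2K_2$ in $G$ $\Rightarrow$ induced $C_4$ in the interval graph $\overline{G}$, a contradiction) that the paper leaves implicit.
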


\begin{cor}\label{cor:delta}
An undirected graph $I$ (with loop at every vertex) is an interval graph if and only if $\overline{I}$ has an orientation of a Ferrers digraph (without loops). 
\end{cor}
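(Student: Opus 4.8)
The plan is to prove the two implications separately, exploiting the fact that complementation of adjacency matrices converts the characterization of Observation \ref{obs:alpha} into the one asserted here. Throughout I would use repeatedly that, for digraphs, complementation and transposition commute ($\overline{D}^T = \overline{D^T}$) and that De Morgan's law holds entrywise, so that $\overline{F \cap F^T} = \overline{F} \cup \overline{F}^T$ and $\overline{D \cup D^T} = \overline{D} \cap \overline{D}^T$. I would also use, as noted just before the statement, that complementation sends a loopless digraph to a digraph with a loop at every vertex (and conversely), and in particular that the complement of a loopless Ferrers digraph is a Ferrers digraph with a loop at every vertex.

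For the \emph{if} direction, suppose $\overline{I}$ has an orientation $D$ that is a loopless Ferrers digraph, and set $F := \overline{D}$. Then $F$ is a Ferrers digraph with a loop at every vertex. Since $D$ is an orientation of $\overline{I}$, its underlying undirected graph is $\overline{I}$, that is $D \cup D^T = \overline{I}$. Complementing and applying De Morgan's law gives $F \cap F^T = \overline{D} \cap \overline{D}^T = \overline{D \cup D^T} = \overline{\overline{I}} = I$. Thus $F$ is a Ferrers digraph (with loops) realizing $I = F \cap F^T$, so $I$ is an interval graph by Observation \ref{obs:alpha}.

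For the \emph{only if} direction, assume $I$ is an interval graph. Then its complement $\overline{I}$, a loopless graph, is a co-interval graph, and by the classical characterization of interval graphs recalled above it is transitively orientable. Fix a transitive orientation $D$ of $\overline{I}$. By construction $D$ is an orientation of $\overline{I}$, and it is loopless because $\overline{I}$ has no loops. Corollary \ref{cor:chi} then applies and tells us that $D$, being a transitive orientation of a loopless co-interval graph, is a Ferrers digraph. Hence $\overline{I}$ has an orientation that is a loopless Ferrers digraph, as required.

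The main obstacle is in the \emph{only if} direction, namely guaranteeing that the orientation produced is genuinely a Ferrers digraph. One might hope to mirror the \emph{if} direction by taking the digraph $F$ furnished by Observation \ref{obs:alpha} and setting $D := \overline{F}$; then $D$ is loopless and Ferrers and $D \cup D^T = \overline{F \cap F^T} = \overline{I}$. The difficulty is that $\overline{F}$ need not be antisymmetric, so it fails to be an orientation of $\overline{I}$ unless $F$ is total (i.e.\ $F \cup F^T$ is complete), a property Observation \ref{obs:alpha} does not assert. Routing the argument through transitive orientability and Corollary \ref{cor:chi} sidesteps this, since a transitive orientation is by definition an orientation of the graph; the cost is the reliance on the known fact that the complement of an interval graph is transitively orientable.
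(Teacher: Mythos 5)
Your proof is correct and takes essentially the route the paper intends: Corollary \ref{cor:delta} is stated there without explicit proof as a consequence of the preceding observations, and your two directions --- complementing the Ferrers orientation (using that the complement of a loopless Ferrers digraph is a Ferrers digraph with a loop at every vertex) so as to invoke Observation \ref{obs:alpha}, and transitively orienting the co-interval graph $\overline{I}$ and applying Corollary \ref{cor:chi} --- are exactly the ingredients the paper lines up just before stating the corollary. Your closing remark also correctly identifies the one genuine subtlety, namely that the $F$ furnished by Observation \ref{obs:alpha} need not have $F \cup F^T$ complete, so $\overline{F}$ need not be an orientation of $\overline{I}$, which is precisely why the forward direction must pass through transitive orientability rather than through direct complementation.
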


The {\em intersection digraph} $D=(V,E)$ of a family of ordered pairs of sets $\Set{(S_u,T_u)}{u\in V}$ is the digraph such that $uv\in E$ if and only if $S_u\cap T_v\neq\emptyset$. An {\em Interval digraph} is an intersection digraph of a family of ordered pairs of intervals on the real line. A bipartite graph (in short, {\em bigraph}) $B(X,Y,E)$ is an {\em intersection bigraph} if there exist a family $\mathcal{F} = \{ I_v : v \in X\cup Y\}$ of sets such that $uv\in E$ ($u\in X,\ v\in Y$) if and only if $I_u\cap I_v \neq \emptyset$. An {\em interval bigraph} is such when each $I_v$ is an interval on the real line. The submatrix of the adjacency matrix of $B$ consisting of the rows corresponding to one partite set and the columns corresponding to the other is known as the {\em biadjacency matrix} of $B$. It should be noted that the two concepts intersection digraph and  intersection bigraph are basically equivalent \cite{P}. Indeed the bigraph whose biadjacency matrix is the adjacency matrix of a digraph corresponds to the digraph. Also given any bigraph, if the number of vertices of the sets $X$ and $Y$ are not equal, we can make them equal by properly introducing some isolated vertices (correspondingly adding the required number of rows and columns consisting of all zeros in the biadjacency matrix of the bigraph) and then convert it into the adjacency matrix of a digraph. The bigraph corresponding to a Ferrers digraph is known as {\em Ferrers bigraph} and the {\em Ferrers dimension} of a bigraph $B$ is the minimum number of Ferrers bigraphs whose intersection is $B$. Now we shall observe an interesting relation between Ferrers bigraphs and interval graphs.

\begin{defn} 
Let $B$ be a bigraph with biadjacency matrix $A$. Then the graph with the following adjacency matrix is denoted by $\widehat{B}$:
$$\begin{array}{|c|c|}
\hline 
\mathbf{1} & A \\
\hline 
A^T & \mathbf{1} \\
\hline 
\end{array}$$
Clearly the graph $\widehat{B}$ is obtained from the bigraph $B$ by joining edges so that the partite sets of $B$ become cliques and by adding loops at all vertices.
\end{defn}

Let $M$ be a symmetric $(0,1)$ matrix with $1$'s in the principal diagonal. Then $M$ is said to satisfy the {\em quasi-linear property for ones} if $1$'s are consecutive right to and below the principal diagonal. It is known \cite{M} that an undirected graph $G$ (with loop at every vertex) is an interval graph if and only if rows and columns of $A(G)$ can be suitably permuted (using the same permutation for rows and columns) in such a way that it satisfies the quasi-linear property for ones. Now if $F$ is a Ferrers bigraph, then it is interesting to note that $\widehat{F}$ is an interval graph (with loop at every vertex), as its adjacency matrix has quasi-linear property for ones.

\begin{figure}[h]
\begin{center}
\includegraphics*[scale=0.25]{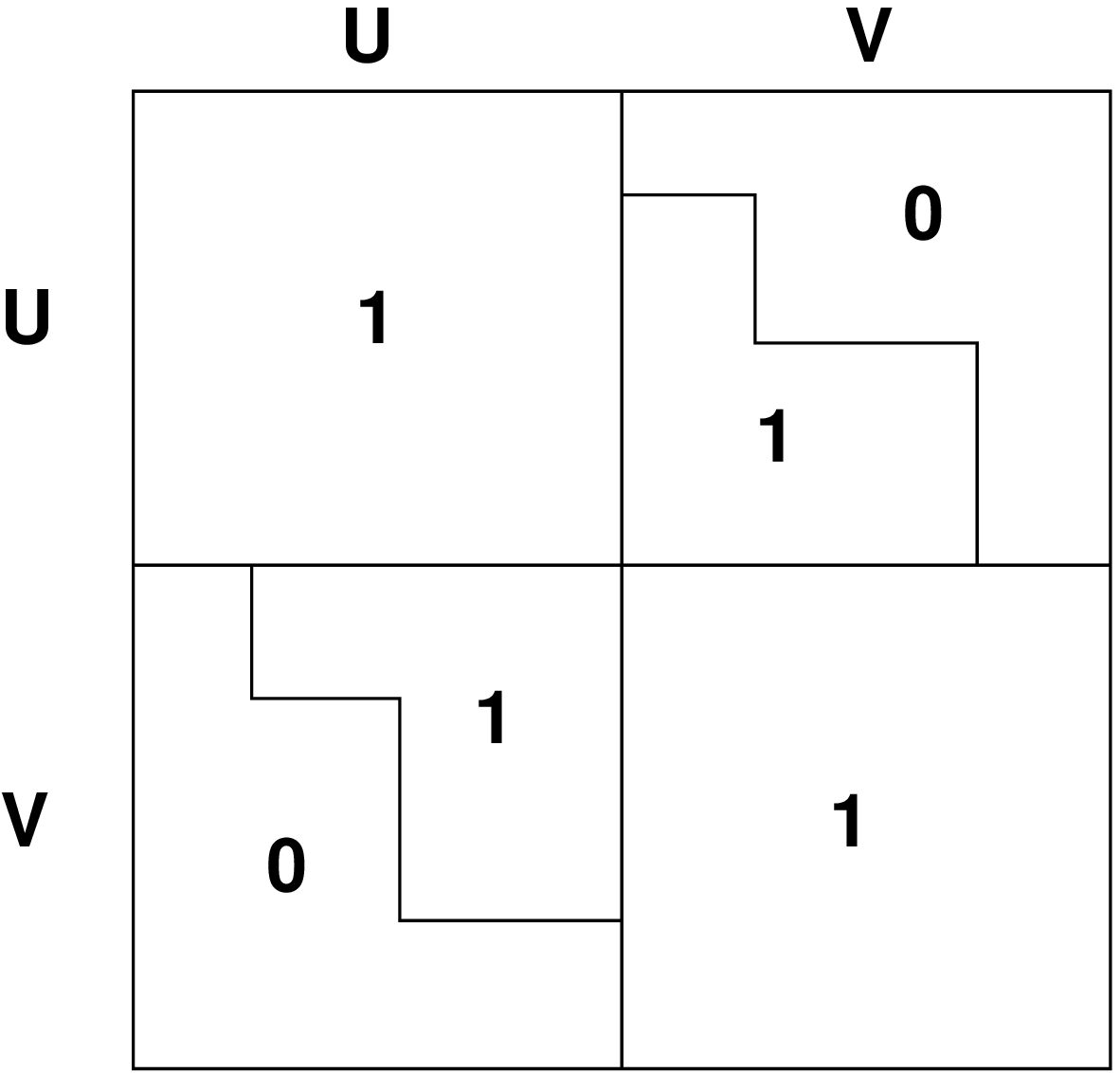}
\end{center}
\end{figure}

Conversely, consider an interval graph $I$ whose vertices are covered by two disjoint cliques, say $X$ and $Y$. We call such an interval graph, a {\em $2$-clique interval graph}. Now since $I$ is an interval graph, its maximal cliques are consecutively ordered. Let $\set{C_1,C_2,\ldots C_r}$ be a consecutive linear ordering of maximal cliques. Assign (closed) intervals $I_v$ to the each vertex $v$ according to its first and last appearance in above sequence of maximal cliques. Let $X\subseteq C_i$ and $Y\subseteq C_j$. 

\vspace{1em} Suppose $i<j$. Now for every $x\in X,\ i\in I_x$ and for all $y\in Y,\ j\in I_y$. We may restrict the right end point of each $I_x$ up to $j$ whenever it is exceeding $j$ as every $I_y$ contains $j$ and each $I_x$ has already a common point, namely $i$, for every other vertex in $X$. Similarly restrict the left end points of $I_y$ up to $i$ whenever it is lower than $i$. 

\vspace{1em} With this new assignment of intervals for the interval graph $I$, we go for further reduction. Now since for every $y\in Y$, the left end point of $I_y$ is $\geqslant i$ and $i\in I_x$ for all $x\in X$, safely we may fix all the left end points of $I_x$ to $i$ and similarly all right end points of $I_y$ to $j$.

\vspace{1em} Finally we arrange all the vertices of $I$ in its adjacency matrix according to the lexicographic ordering (dictionary order) of the above constructed intervals. Thus the adjacency matrix of $I$ (with loop at every vertex) takes the following form:
$$\begin{array}{c|cc}
\multicolumn{1}{c}{} & X & Y \\ \cline{2-3} 
X & \multicolumn{1}{c|}{\mathbf{1}} & \multicolumn{1}{c|}{A} \\ \cline{2-3} 
Y & \multicolumn{1}{c|}{A^T} & \multicolumn{1}{c|}{\mathbf{1}} \\ \cline{2-3}
\end{array}$$
Moreover, in this matrix, $x_i y_j=0$ if and only if $I_{x_i}<I_{y_j}$ which gives us $x_i y_j=0\Longrightarrow x_i y_k=0$ for all $k\geqslant j$ and $x_i y_j=1\Longrightarrow x_r y_j=1$ for all $r\leqslant i$. That is in each row of the submatrix $A$, every $0$ has only $0$ to its right and every $1$ has only $1$ below it. What this says is nothing but the bigraph corresponding to the biadjacency matrix $A$ is a Ferrers bigraph. The case for $i>j$ is similar. In this case the vertices of $Y$ would come before those in $X$ in the adjacency matrix of $I$. Thus we have the following result:

\begin{obs}\label{t:fchar}
A bigraph $B$ is a Ferrers bigraph if and only if $\widehat{B}$ is a ($2$-clique) interval graph.
\end{obs}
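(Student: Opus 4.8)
The plan is to prove the two implications separately, using the quasi-linear characterization of interval graphs from \cite{M} for one direction and the consecutive-clique structure of interval graphs for the other. Throughout I work at the level of the biadjacency matrix $A$, which is legitimate since intersection bigraphs and intersection digraphs are essentially the same object and since $B$ is a Ferrers bigraph exactly when $A$ contains no $2\times 2$ permutation submatrix.

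First I would treat the forward direction, that a Ferrers bigraph $B$ yields a $2$-clique interval graph $\widehat{B}$. Since $B$ is Ferrers, $A$ has no $2\times 2$ permutation submatrix, so I may independently permute the rows (vertices of $X$) and the columns (vertices of $Y$) so that the supports of the rows are nested; equivalently $A$ takes the staircase (Ferrers-diagram) shape in which each row's $1$'s form a left-justified prefix and the number of $1$'s is non-increasing down the rows. Listing the vertices of $X$ first in this order and then those of $Y$, I would check directly that the resulting adjacency matrix of $\widehat{B}$ (the bordered matrix with all-$1$ diagonal blocks) satisfies the quasi-linear property for ones: in each $X$-row the $1$'s run consecutively from the diagonal through the $X$-clique and then across an initial segment of the $A$-block, while in each $Y$-row the entries to the right of the diagonal lie entirely in the $Y$-clique (hence all $1$ and trivially consecutive), and the $A^{T}$-entries, sitting to the left of the diagonal, are governed by symmetry by the prefix structure of the columns of $A$. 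By \cite{M} this makes $\widehat{B}$ an interval graph, and it is a $2$-clique interval graph since $X$ and $Y$ are cliques by construction.

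For the converse I would start from a $2$-clique interval graph $I=\widehat{B}$ with cliques $X$ and $Y$ and use the fact that the maximal cliques of an interval graph admit a consecutive linear ordering $C_{1},\ldots,C_{r}$. Assigning to each vertex $v$ the interval $I_{v}$ determined by its first and last appearance in this sequence, and letting $X\subseteq C_{i}$, $Y\subseteq C_{j}$ with (say) $i<j$, I would normalize the intervals as in the preceding discussion: first clip the right endpoints of the $I_{x}$ down to $j$ and the left endpoints of the $I_{y}$ up to $i$, which is harmless because every $I_{x}$ already shares the point $i$ and every $I_{y}$ already contains $j$; then push every left endpoint of $I_{x}$ to $i$ and every right endpoint of $I_{y}$ to $j$. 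Ordering all vertices lexicographically by their normalized intervals, I would read off that $x_{i}y_{j}=0$ exactly when $I_{x_i}<I_{y_j}$, which forces $0$'s to propagate to the right within a row and $1$'s to propagate upward within a column; this is precisely the no-$2\times 2$-permutation-submatrix condition, so $B$ is a Ferrers bigraph. The symmetric case $i>j$ is identical after interchanging the roles of $X$ and $Y$.

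The main obstacle is the normalization step in the converse: I must argue that successively clipping and then collapsing the interval endpoints never destroys an existing intersection nor creates a spurious one, so that the intersection graph is preserved at every stage while the biadjacency matrix is driven into staircase form. The forward direction is comparatively routine once the quasi-linear property of \cite{M} is invoked, and the reduction to the matrix $A$ lets me carry out both arguments purely combinatorially.
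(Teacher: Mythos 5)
Your proposal is correct and follows essentially the same route as the paper: the forward direction verifies the quasi-linear property for ones of \cite{M} on the bordered adjacency matrix of $\widehat{B}$ after putting $A$ in staircase form, and the converse uses the consecutive ordering of maximal cliques with the identical clipping-and-collapsing normalization of intervals and lexicographic ordering to read off the Ferrers shape of $A$. The normalization step you flag as the main obstacle is resolved exactly as in the paper's discussion, since every $I_x$ contains $i$ and every $I_y$ contains $j$, so the adjustments never destroy or create an intersection.
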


It is interesting to note that every $2$-clique interval graph $I$ is necessarily an {\em indifference graph}\footnote{Equivalently, a {\em proper interval graph} (an interval graph with an interval representation where no interval properly contains another) or a {\em unit interval graph} (which has an interval representation with all the intervals are of same length) or an interval graph which does not contain an induced copy of $K_{1,3}$.} as $I$ does not contain an induced $K_{1,3}$ (since among any three vertices of $I$, two of them must be in the same clique). 
Also since the bigraph complement (also called the {\em converse}) of a Ferrers bigraph is again a Ferrers bigraph, the above observation immediately gives the following:

\begin{cor}\label{c:fcomp}
A bigraph $B$ is a Ferrers bigraph if and only if its graph complement is a $2$-clique interval graph (with loop at every vertex).\footnote{The result is analogous to a known one which states that a bigraph $B$ is of Ferrers dimension at most $2$ if and only if its graph complement is a $2$-clique circular-arc graph (with loop at every vertex).}
\end{cor}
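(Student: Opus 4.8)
The plan is to reduce the corollary to Observation~\ref{t:fchar} by recognizing that the graph complement of $B$ is nothing but $\widehat{B'}$ for a suitable auxiliary bigraph $B'$, and then to feed in the already-stated fact that the converse of a Ferrers bigraph is again a Ferrers bigraph. First I would make the phrase ``graph complement'' precise in the paper's own conventions. Viewing the bigraph $B(X,Y,E)$ with biadjacency matrix $A$ as an undirected graph on $X\cup Y$ (with no edges inside $X$ and none inside $Y$), its adjacency matrix has all-zero diagonal blocks and $A,\,A^T$ off the diagonal. Following the convention fixed earlier in the paper for $\overline{\,\cdot\,}$ (where complementation interchanges $0$'s and $1$'s including the principal diagonal), the graph complement $\overline{B}$ has adjacency matrix
$$\begin{array}{|c|c|}
\hline
\mathbf{1} & \overline{A} \\
\hline
\overline{A}^T & \mathbf{1} \\
\hline
\end{array}$$
so that $\overline{B}$ automatically carries a loop at every vertex, exactly as the parenthetical remark in the statement requires.

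Next I would observe that this matrix is precisely $\widehat{B'}$, where $B'$ is the bigraph complement (converse) of $B$, namely the bigraph whose biadjacency matrix is $\overline{A}$. Indeed, the defining recipe for the $\widehat{\,\cdot\,}$ operation fills the two diagonal blocks with $\mathbf{1}$ and places the biadjacency matrix $\overline{A}$ (and its transpose) in the off-diagonal blocks, which is exactly the display above. Hence the identification $\overline{B}=\widehat{B'}$ holds as graphs with loops.

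Finally I would chain the available equivalences. Since the converse of a Ferrers bigraph is again a Ferrers bigraph (the fact stated immediately before the corollary), $B$ is a Ferrers bigraph if and only if $B'$ is a Ferrers bigraph. Applying Observation~\ref{t:fchar} to $B'$ gives that $B'$ is a Ferrers bigraph if and only if $\widehat{B'}$ is a $2$-clique interval graph. Combining these with the identification $\overline{B}=\widehat{B'}$ yields the desired string of equivalences: $B$ is a Ferrers bigraph $\iff$ $B'$ is a Ferrers bigraph $\iff$ $\widehat{B'}$ is a $2$-clique interval graph $\iff$ $\overline{B}$ is a $2$-clique interval graph (with loop at every vertex).

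The only delicate point is bookkeeping rather than a genuine obstacle: one must keep apart the two different complements in play. The ``converse'' (bigraph complement) complements only the biadjacency matrix $A$, whereas the graph complement $\overline{B}$ also complements the diagonal blocks and thereby installs the loops. The corollary is a statement about the latter, while the ``converse of Ferrers is Ferrers'' principle concerns the former; the block computation in the first two steps is exactly what bridges them, and the loops are essential so that the diagonal blocks read $\mathbf{1}$ and the result lands in the class of ($2$-clique) interval graphs with loops to which Observation~\ref{t:fchar} applies.
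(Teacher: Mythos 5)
Your proposal is correct and takes essentially the same route as the paper: the corollary is stated there as an immediate consequence of Observation~\ref{t:fchar} combined with the fact that the bigraph complement (converse) of a Ferrers bigraph is again a Ferrers bigraph, which is precisely the chain of equivalences you assembled via the identification $\overline{B}=\widehat{B'}$. Your block-matrix computation merely makes explicit the bookkeeping (graph complement versus converse, and the appearance of loops) that the paper leaves implicit.
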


In this note, we relate the two concepts - one corresponding to undirected graphs and the other to directed graphs - those of boxicity and Ferrers dimension respectively and propose a new construction for determining the Ferrers dimension of a digraph in the general case.

\section{Relating boxicity with Ferrers dimension}

An application to Observation \ref{obs:alpha} leads to the following theorem. Henceforth we denote the Ferrers dimension of a digraph $D$ [bigraph $B$] by $d_F(D)$ [resp. $d_F(B)$].

\begin{thm} \label{thm:alpha}
Let $G$ be an undirected graph with loop at every vertex. Then there exists a digraph $D$ such that $G=D\cap D^T$ and $\BOX{G}=d_F(D)$. In general, $\BOX{G}\leqslant d_F(D)$ for any digraph $D$ such that $G=D\cap D^T$. Cosequently, 
$$\BOX{G}=\min \Set{d_F (D)}{G=D\cap D^T \textrm{ for some digraph } D}.$$
\end{thm}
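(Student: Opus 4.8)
The plan is to combine two imported results: the classical Roberts characterization of boxicity as the least number of interval graphs whose intersection is $G$, and Observation \ref{obs:alpha}, which identifies the interval graphs (with loop at every vertex) as exactly the graphs of the form $F\cap F^T$ for a Ferrers digraph $F$ (with loop at every vertex). The only algebraic fact I need is that transposition distributes over intersection, $(F_1\cap\cdots\cap F_n)^T=F_1^T\cap\cdots\cap F_n^T$, which is immediate because intersection is entrywise conjunction of adjacency matrices and $(A\wedge B)^T=A^T\wedge B^T$. I would also record at the outset that, since $G=D\cap D^T$ and $G$ has a loop at every vertex, $D$ must have a loop at every vertex, and hence when $D=\bigcap_i F_i$ every factor $F_i$ does as well (a diagonal entry of $D$ is the conjunction of the corresponding diagonal entries of the $F_i$); this is precisely what lets me apply Observation \ref{obs:alpha} to each factor.

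For the inequality $\BOX{G}\leqslant d_F(D)$ valid for \emph{every} $D$ with $G=D\cap D^T$, I would write $n=d_F(D)$ and $D=F_1\cap\cdots\cap F_n$ with each $F_i$ a Ferrers digraph, and compute
$$G=D\cap D^T=\Bigl(\bigcap_{i=1}^n F_i\Bigr)\cap\Bigl(\bigcap_{i=1}^n F_i^T\Bigr)=\bigcap_{i=1}^n\bigl(F_i\cap F_i^T\bigr).$$
Since each $F_i$ carries all loops by the opening remark, Observation \ref{obs:alpha} makes each $F_i\cap F_i^T$ an interval graph. Thus $G$ is exhibited as an intersection of $n$ interval graphs, so by the Roberts characterization its boxicity is at most $n=d_F(D)$.

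For the existence of a $D$ attaining equality I would run the same computation in reverse. Let $b=\BOX{G}$ and write $G=I_1\cap\cdots\cap I_b$ with each $I_j$ an interval graph (with all loops). By Observation \ref{obs:alpha} I choose, for each $j$, a Ferrers digraph $F_j$ with all loops such that $I_j=F_j\cap F_j^T$, and set $D=F_1\cap\cdots\cap F_b$. Then $D\cap D^T=\bigcap_j\bigl(F_j\cap F_j^T\bigr)=\bigcap_j I_j=G$, while $d_F(D)\leqslant b$ because $D$ is displayed as an intersection of $b$ Ferrers digraphs. Combined with the inequality just proved, $b=\BOX{G}\leqslant d_F(D)\leqslant b$, whence $d_F(D)=\BOX{G}$. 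The final displayed formula is then immediate: the first part shows $\BOX{G}$ is a lower bound for $d_F(D)$ over all admissible $D$, and the constructed $D$ meets it.

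I do not expect a genuinely hard step, since the argument is short once the two results are in hand; the points needing care are bookkeeping ones. The main thing to get right is the loop convention — ensuring every Ferrers factor really does have a loop at every vertex, so that Observation \ref{obs:alpha} is applicable in both directions — together with an explicit invocation of the interval-graph-intersection description of boxicity (rather than the box definition given in the introduction), as the entire argument is phrased in terms of intersections of interval graphs.
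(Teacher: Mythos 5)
Your proposal is correct and follows essentially the same route as the paper: construct $D=F_1\cap\cdots\cap F_b$ from the interval-graph factorization of $G$ via Observation \ref{obs:alpha}, and derive the general bound $\BOX{G}\leqslant d_F(D)$ from the identity $D\cap D^T=\bigcap_i(F_i\cap F_i^T)$ together with the observation that all loops propagate from $G$ to $D$ to each Ferrers factor. The only difference is organizational — you prove the inequality directly and then combine it with the construction, whereas the paper establishes $d_F(D)=n$ by contradiction and extracts the inequality afterwards — and your explicit handling of the loop bookkeeping matches the paper's.
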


\begin{proof}Let $n=\BOX{G}$. Then $G = I_1 \cap I_2 \cap \cdots \cap I_n$, where each $I_i$ is an interval graph with loop at every vertex for $i = 1, 2, \ldots , n$. Also by Observation \ref{obs:alpha}, for each $i = 1, 2, \ldots, n$, $I_i = F_i \cap {F_i}^T$ for some Ferrers digraph $F_i$ (with loop at every vertex). Then $G=D\cap D^T$, where $D = F_1 \cap F_2 \cap \cdots \cap F_n$. As $D$ can be expressed as the intersection of $n$ Ferrers digraphs, $d_F(D) \leq n$. We show that $d_F(D)$ is exactly equal to $n$. If possible, let $d_F(D) = m$ where $m < n$. Then there exist Ferrers digraphs $F^\prime_i$, for $i = 1, 2, \ldots , m$, for which $D = F^\prime_1 \cap F^\prime_2 \cap \cdots \cap F^\prime_m$. Now $G=D \cap D^T={I^\prime_1} \cap {I^\prime_2} \cap \cdots \cap {I^\prime_m}$, where $I^\prime_i = {F^\prime_i} \cap {{F^\prime_i}^T}$ for $i = 1, 2, \ldots , m$. Again since the graph $G$ has loops at all the vertices and $G=D\cap D^T$, the digraph $D$ and hence each $F_i^\prime$ also has loops at all its vertices. Then by Observation \ref{obs:alpha}, each $I^\prime_i$'s is an interval graph. So $G$ can be expressed as the intersection of $m$ interval graphs, where $m < n$, contrary to the fact that the boxicity of $G$ is $n$. Hence $d_F(D) = n$.

\vspace{1em} Moreover from the above deduction, it follows that, whenever $G=D\cap D^T$ for some digraph $D$, we have $\BOX{G}\leqslant d_F(D)$. This completes the proof.
\end{proof}

On the other hand the following is a consequence of Observation \ref{t:fchar}:

\begin{thm}
Let $B$ be a bipartite graph. Then $d_F(B)=\BOX{\widehat{B}}$.
%the Ferrers dimension of $B$ is equal to the boxicity of the graph $$.
\end{thm}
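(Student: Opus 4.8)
The plan is to mirror the structure of the proof of Theorem \ref{thm:alpha}, establishing the two inequalities $\BOX{\widehat{B}}\leqslant d_F(B)$ and $d_F(B)\leqslant\BOX{\widehat{B}}$ separately, with Observation \ref{t:fchar} supplying the dictionary between Ferrers bigraphs and ($2$-clique) interval graphs. The single structural fact that drives the argument is that the operation $B\mapsto\widehat{B}$ is compatible with intersection: if $B=B_1\cap\cdots\cap B_k$ as bigraphs on a common bipartition (so that the biadjacency matrix of $B$ is the entrywise minimum of those of the $B_i$), then $\widehat{B}=\widehat{B_1}\cap\cdots\cap\widehat{B_k}$. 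This is immediate from the block form of the adjacency matrices: the two diagonal blocks are all $\mathbf{1}$ and intersect to $\mathbf{1}$, while the off-diagonal block of the intersection is obtained by intersecting the corresponding off-diagonal blocks, which is exactly the biadjacency matrix of $B$.

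For the first inequality I would suppose $d_F(B)=n$, so $B=F_1\cap\cdots\cap F_n$ with each $F_i$ a Ferrers bigraph. The compatibility fact then gives $\widehat{B}=\widehat{F_1}\cap\cdots\cap\widehat{F_n}$, and by Observation \ref{t:fchar} each $\widehat{F_i}$ is a ($2$-clique) interval graph with a loop at every vertex. Hence $\widehat{B}$ is an intersection of $n$ interval graphs, and therefore $\BOX{\widehat{B}}\leqslant n=d_F(B)$.

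For the reverse inequality, let $m=\BOX{\widehat{B}}$ and write $\widehat{B}=I_1\cap\cdots\cap I_m$ with each $I_i$ an interval graph carrying a loop at every vertex. The step I expect to be the crux is arguing that each factor $I_i$ is necessarily of the form $\widehat{B_i}$ for a bigraph $B_i$ on the same bipartition $(X,Y)$. The key observation is that intersection forces $E(\widehat{B})\subseteq E(I_i)$ for every $i$; since $X$ and $Y$ are cliques in $\widehat{B}$ and every vertex of $\widehat{B}$ carries a loop, each $I_i$ already contains all edges within $X$, all edges within $Y$, and all loops. Thus in each $I_i$ the disjoint sets $X$ and $Y$ are cliques covering the whole vertex set, so $I_i$ is a $2$-clique interval graph; ordering its vertices as $X$ followed by $Y$ puts its adjacency matrix in the form $\left(\begin{smallmatrix}\mathbf{1} & A_i\\ A_i^T & \mathbf{1}\end{smallmatrix}\right)$, that is, $I_i=\widehat{B_i}$ with $B_i$ the bigraph of biadjacency matrix $A_i$.

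Once this reduction is in hand the proof closes quickly: Observation \ref{t:fchar}, in its ``$2$-clique interval graph $\Rightarrow$ Ferrers bigraph'' direction, shows that each $B_i$ is a Ferrers bigraph, and reading off the off-diagonal blocks in $\widehat{B}=\widehat{B_1}\cap\cdots\cap\widehat{B_m}$ yields $B=B_1\cap\cdots\cap B_m$. Hence $B$ is an intersection of $m$ Ferrers bigraphs and $d_F(B)\leqslant m=\BOX{\widehat{B}}$, giving equality when combined with the first inequality. The only point demanding genuine care is the reduction just described, and the all-$\mathbf{1}$ diagonal blocks of $\widehat{B}$ are precisely what guarantee, with no loss of generality, that every interval-graph factor of $\widehat{B}$ is automatically a $2$-clique interval graph on $(X,Y)$.
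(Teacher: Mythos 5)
Your proof is correct and follows essentially the same route as the paper's: the forward inequality via $\widehat{B}=\widehat{F_1}\cap\cdots\cap\widehat{F_n}$ and Observation \ref{t:fchar}, and the reverse by noting that each interval-graph factor of $\widehat{B}$ must contain the two cliques $X$ and $Y$ (since intersection only removes edges), hence is a $2$-clique interval graph of the form $\widehat{B_i}$, whose off-diagonal block is a Ferrers bigraph. You merely spell out in more detail the step the paper states briefly, namely that the factors automatically inherit the $2$-clique structure on the partite sets of $B$.
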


\begin{proof}
Suppose the bigraph $B$ is of Ferrers dimension $m$. Then $B=F_1\cap F_2\cap\cdots \cap F_m$ for some Ferrers bigraphs $F_i$, {\small ($i=1,2,\ldots ,m$)} which implies $\widehat{B}=\widehat{F_1}\cap \widehat{F_2}\cap\cdots \cap \widehat{F_m}$. Since each $\widehat{F_i}$ is an interval graph by Observation \ref{t:fchar}, we have $n\leqslant m$, if the graph $\widehat{B}$ has boxicity $n$. 

\vspace{1em} Conversely, if $n$ is the boxicity of $\widehat{B}$, then $\widehat{B}=I_1\cap I_2\cap\cdots\cap I_n$ where each $I_j$ is an interval graph. Also since their intersection (the graph $\widehat{B}$) has two cliques covering all the vertices, each $I_j$ also contains same cliques for those vertices, i.e., each of them is a $2$-clique interval graphs and the two cliques are consisting of the partite sets of $B$. Thus it follows from Observation \ref{t:fchar} that $B$ is the intersection of $n$ Ferrers bigraphs, $F_1,F_2,\ldots ,F_n$ such that $\widehat{F_j}=I_j$ for all $j=1,2,\ldots ,n$. Therefore $m\leqslant n$, as required.
\end{proof}

As an immediate consequence of the above theorem we obtain certain  characterizations of bigraphs of Ferrers dimension $2$ and interval bigraphs.

\begin{cor}
A bipartite graph $B$ is of Ferrers dimension at most $2$ if and only if $\widehat{B}$ is a $2$-clique rectangular graph.\footnote{A {\em rectangular graph} is an intersection graph of rectangles in $\Real^2$. A {\em $2$-clique rectangular graph} is a rectangular graph whose vertices are covered by two disjoint cliques.}
\end{cor}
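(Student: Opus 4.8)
The plan is to combine the previous theorem with the boxicity-$2$ characterization of rectangular graphs and Cogis's characterization of Ferrers dimension $2$. First I would recall the immediately preceding result $d_F(B)=\BOX{\widehat{B}}$, which reduces the corollary to the purely geometric assertion that $\BOX{\widehat{B}}\leqslant 2$ if and only if $\widehat{B}$ is a rectangular graph. Since an axis-parallel box in $\Real^2$ is exactly a product $I_1\times I_2$ of two closed intervals, a graph has boxicity at most $2$ precisely when it is the intersection graph of such boxes, i.e., precisely when it is a rectangular graph; this is essentially Roberts's original observation that $\BOX{G}\leqslant 2$ characterizes rectangle intersection graphs. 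So the equivalence $d_F(B)\leqslant 2 \Longleftrightarrow \widehat{B}$ is a rectangular graph follows directly.

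The remaining content is the adjective \emph{$2$-clique} in the statement. Here I would invoke the definition of $\widehat{B}$: its adjacency matrix has the all-ones blocks $\mathbf{1}$ in the two diagonal positions, so by construction the two partite sets $X$ and $Y$ of $B$ each form a clique in $\widehat{B}$ (with loops at all vertices), and together they cover all vertices. Thus $\widehat{B}$ is \emph{always} a graph whose vertices are covered by two disjoint cliques, so whenever $\widehat{B}$ happens to be rectangular it is automatically a $2$-clique rectangular graph. Conversely, any $2$-clique rectangular graph that arises as some $\widehat{B}$ is in particular a rectangular graph, so the boxicity condition applies. Hence attaching the $2$-clique qualifier costs nothing on the side of $\widehat{B}$, and the corollary is simply the specialization of the preceding theorem to $m=n=2$, read through the rectangular-graph reformulation of boxicity $2$.

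Concretely, the steps in order are: (i) apply $d_F(B)=\BOX{\widehat{B}}$ to convert the Ferrers-dimension hypothesis into the boxicity hypothesis $\BOX{\widehat{B}}\leqslant 2$; (ii) identify boxicity at most $2$ with being a rectangular graph, using that $2$-boxes are axis-parallel rectangles in $\Real^2$; and (iii) observe via the definition of $\widehat{B}$ that it is intrinsically a $2$-clique graph, so rectangular and $2$-clique rectangular coincide for graphs of the form $\widehat{B}$. The main obstacle I anticipate is step (iii): one must be careful that the ``$2$-clique'' structure is forced by the shape of $\widehat{B}$ rather than being an extra hypothesis, so that the biconditional is not weakened in either direction. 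I would verify this by noting that the two diagonal $\mathbf{1}$ blocks guarantee $X$ and $Y$ are cliques irrespective of the box representation, and that any rectangle representation of $\widehat{B}$ automatically respects these cliques, exactly as in the proof of the preceding theorem where each factor interval graph was shown to inherit the same two covering cliques.
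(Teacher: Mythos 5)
Your proposal is correct and matches the paper's intended reasoning: the paper gives no separate proof, stating the corollary as an immediate consequence of the theorem $d_F(B)=\BOX{\widehat{B}}$, combined with exactly your observations that boxicity at most $2$ means representability by axis-parallel rectangles in $\Real^2$ and that the two diagonal $\mathbf{1}$ blocks of $\widehat{B}$ make the $2$-clique condition automatic. Your explicit verification of step (iii) — that the $2$-clique qualifier is forced by the construction of $\widehat{B}$ and so does not weaken either direction of the biconditional — is a careful spelling-out of what the paper leaves implicit.
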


\begin{cor}
A bipartite graph $B$ is an interval bigraph if and only if $\widehat{B}$ is a $2$-clique rectangular graph such that there is a rectangular representation of $\widehat{B}$ in which for every pair of rectangles, their projections intersect on at least one of the axes. 
\end{cor}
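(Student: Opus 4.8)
The plan is to leverage the preceding theorem, $d_F(B)=\BOX{\widehat{B}}$, together with the known equivalence between interval bigraphs and bigraphs of Ferrers dimension at most $2$ whose Ferrers representation can be taken simultaneously on a common linear order. First I would recall that a bipartite graph $B$ is an interval bigraph if and only if $B=F_1\cap F_2$ for two Ferrers bigraphs $F_1,F_2$ that admit \emph{consistent} linear orderings (i.e., the two orderings on each partite set can be taken so that the interval endpoints line up); this is the classical characterization of interval bigraphs as the Ferrers-dimension-$2$ bigraphs with an additional ``interlacing'' condition. The corollary just above already tells us that $d_F(B)\leqslant 2$ corresponds exactly to $\widehat{B}$ being a $2$-clique rectangular graph, so the content of this statement is to pin down which \emph{extra} geometric condition on the rectangular representation corresponds to the interval-bigraph restriction.

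The key steps, in order, would be as follows. For the forward direction, suppose $B$ is an interval bigraph with interval $I_v$ assigned to each vertex $v\in X\cup Y$, where $uv$ is an edge iff $I_u\cap I_v\neq\emptyset$. I would build a rectangular representation of $\widehat{B}$ by assigning to each vertex $v$ a rectangle whose two side-intervals both encode $I_v$ (one axis per ``copy''), arranged so that the two cliques $X$ and $Y$ are forced by collapsing one coordinate on each partite class; concretely, one axis realises the interval structure while the second axis is used to separate $X$ from $Y$ and to install the two cliques. Then two rectangles for $u,v$ in the \emph{same} partite set always intersect (clique), while for $u\in X,\ v\in Y$ the rectangles intersect precisely when $I_u\cap I_v\neq\emptyset$ on the interval-carrying axis and the separating axis is also arranged to overlap. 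The crucial observation is that in this construction the projections of any two rectangles intersect on at least one axis, because the interval-carrying coordinate of $B$ lives on a single real line. For the converse, I would start from a $2$-clique rectangular representation of $\widehat{B}$ in which every pair of rectangles has projections meeting on at least one axis, and read off interval representations on each axis; using Observation~\ref{t:fchar} each axis yields a Ferrers bigraph $F_1,F_2$ with $B=F_1\cap F_2$, and the ``projections meet on at least one axis'' hypothesis is exactly what forces $F_1$ and $F_2$ to come from a \emph{single} interval system, so that $B$ is an interval bigraph rather than merely of Ferrers dimension $2$.

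The main obstacle I expect is making precise the translation between ``projections intersect on at least one axis'' and the interlacing condition that upgrades Ferrers dimension $2$ to an honest interval-bigraph representation. The subtlety is that a generic $2$-clique rectangular graph gives two \emph{independent} interval systems on the two axes (hence only $d_F\leqslant 2$), whereas an interval bigraph requires the \emph{same} underlying set of interval endpoints to serve both roles; the geometric hypothesis on projections is the device that synchronises the two axes. I would therefore spend the bulk of the argument verifying that, under this hypothesis, one can simultaneously normalise the two rectangular coordinates onto a common linear order of the vertices of $B$, so that the two Ferrers bigraphs $F_1$ and $F_2$ arising from the axes are in fact converse orderings of one interval assignment. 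Once this synchronisation is established, the equivalence with the interval-bigraph definition is immediate, and combined with the previous corollary (characterising $d_F(B)\leqslant 2$ by $\widehat{B}$ being $2$-clique rectangular) the stated characterization follows.
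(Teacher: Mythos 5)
There is a genuine gap here, and it sits exactly at the crux. The paper's proof is a one-line translation: if $\widehat{B}=I_1\cap I_2$ where $I_j$ is the interval graph read off axis $j$ of a rectangular representation, then the hypothesis ``for every pair of rectangles their projections intersect on at least one axis'' says \emph{precisely} that $I_1\cup I_2$ is complete; and the classical characterization of Sen, Das, Roy and West \cite{M} states that $B$ is an interval bigraph if and only if $B=F_1\cap F_2$ for two Ferrers bigraphs whose \emph{union is complete}. Combined with Observation \ref{t:fchar} and the preceding theorem, that chain of equivalences is the entire proof. You instead misquote the classical characterization as a ``consistent linear orderings''/``interlacing'' condition, which is not a theorem in any usable form, and consequently your converse direction rests on the unproved assertion that the projection hypothesis ``forces $F_1$ and $F_2$ to come from a single interval system.'' That synchronisation claim is exactly the nontrivial direction of the theorem in \cite{M}; you announce that ``the bulk of the argument'' would verify it but supply no argument, so your proof is missing its essential step. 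Had you recognized the geometric condition as completeness of $I_1\cup I_2$, the result would follow by citation rather than by re-proving \cite{M} from scratch.

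Your forward construction is also broken as described. If one axis ``realises the interval structure,'' i.e., carries the raw intervals $I_v$, then two vertices of the same partite set with disjoint intervals receive rectangles whose projections on that axis are disjoint, so the rectangles cannot intersect no matter what the second axis does (rectangles meet only if \emph{both} projections meet); this contradicts the requirement that $X$ and $Y$ be cliques in $\widehat{B}$, and no ``separating'' second axis can repair it. The correct construction puts a \emph{different} interval system on each axis, both derived from the single assignment $I_v=[l_v,r_v]$: form the Ferrers bigraphs $F_1$ (where $xy\in F_1 \Leftrightarrow l_y\leqslant r_x$) and $F_2$ (where $xy\in F_2 \Leftrightarrow l_x\leqslant r_y$), whose union is complete, and represent $\widehat{F_1}$ on axis $1$ (say $x\mapsto [-M,r_x]$, $y\mapsto [l_y,M]$) and $\widehat{F_2}$ on axis $2$ ($x\mapsto [l_x,M]$, $y\mapsto [-M,r_y]$). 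Then same-class rectangles always meet, cross-class rectangles meet exactly when $I_x\cap I_y\neq\emptyset$, and every pair of rectangles has projections meeting on at least one axis because two intervals on a line either intersect or one lies wholly to the left of the other. This is in substance the paper's route, which your sketch circles around but never lands on.
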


\begin{proof}
The proof follows from the fact that

\vspace{1em} \begin{tabular}{cl}
 & $B$ is an interval bigraph\\
$\Longleftrightarrow$ & $B=F_1\cap F_2$ where $F_1$ and $F_2$ are two Ferrers bigraphs whose union is complete \cite{M}\\
$\Longleftrightarrow$ & $\widehat{B}=\widehat{F_1}\cap \widehat{F_2}$ for two Ferrers bigraphs, $F_1,F_2$ with $\widehat{F_1}\cup \widehat{F_2}$ is complete\\
$\Longleftrightarrow$ & $\widehat{B}=I_1\cap I_2$ where $I_1$ and $I_2$ are ($2$-clique) interval graphs whose union is complete.
\end{tabular}

\end{proof}

Let $G$ be an undirected graph. Denote the corresponding (symmetric) digraph with the same adjacency matrix as that of $G$ by $D(G)$.

\begin{thm} \label{thm:beta}
Let $G$ be an undirected graph $G$ (with loop at every vertex) such that $\BOX{G}=b$. Let $D(G)$ be the corresponding digraph with the same adjacency matrix as that of $G$ and $k=d_F(D(G))$. Then 
$$\frac{k}{2} \leq b \leq (k-1),$$
and the bounds are tight.
\end{thm}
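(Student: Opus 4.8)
The plan is to read both quantities off the single symmetric matrix $A(G)$, using the identification $G=D(G)$, and to treat the two inequalities separately. Throughout I would assume $G$ is not complete, so $b\ge 1$ (otherwise $b=0$ and the lower bound fails). The engine of the whole argument is one small fact about loops, which I would isolate first: if $F$ is a Ferrers digraph with a loop at every vertex and $u\ne v$, then $F$ cannot miss both arcs $uv$ and $vu$. Indeed, if it did, the principal $2\times2$ submatrix of $A(F)$ on $\{u,v\}$ would be the $2\times2$ identity, a permutation matrix forbidden by Riguet's characterization \cite{R}. Equivalently, the off-diagonal $0$'s of a loopy Ferrers digraph meet each transpose pair $\{uv,vu\}$ at most once.

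For the inequality $b\ge k/2$ I would invoke Theorem \ref{thm:alpha} to obtain a digraph $D$ with $G=D\cap D^T$ and $d_F(D)=b$. Writing $D=F_1\cap\cdots\cap F_b$ with each $F_i$ Ferrers, and recalling that each $F_i^{T}$ is again Ferrers, I get $G=D\cap D^T=F_1\cap\cdots\cap F_b\cap F_1^{T}\cap\cdots\cap F_b^{T}$, an expression of the symmetric digraph $D(G)$ as an intersection of $2b$ Ferrers digraphs. Hence $k=d_F(D(G))\le 2b$, i.e. $b\ge k/2$.

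For $b\le k-1$ I would start from a minimum decomposition $D(G)=F_1\cap\cdots\cap F_k$ into Ferrers digraphs; since $A(G)$ has an all-$1$ diagonal, every $F_i$ carries all loops. Set $I_i=F_i\cap F_i^{T}$ for $i=1,\dots,k-1$, each an interval graph with loops by Observation \ref{obs:alpha}, and claim $G=I_1\cap\cdots\cap I_{k-1}$; this yields $b\le k-1$. The inclusion $G\subseteq\bigcap_i I_i$ is clear. For the reverse it is enough to show that every off-diagonal $0$ of $G$, occurring at some entry $uv$ (hence also at $vu$, by symmetry, with $u\ne v$), is a $0$ of some $I_i$ with $i\le k-1$. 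If every $F_i$ with $i\le k-1$ contained both $uv$ and $vu$, then, since $G$ misses $uv$ and $vu$, the digraph $F_k$ would have to miss both, which is impossible by the isolated fact. So some $F_i$ ($i\le k-1$) misses $uv$ or $vu$, and in either case $I_i=F_i\cap F_i^{T}$ misses $uv$. This is the one genuinely clever step: the transpose-pair fact is exactly what upgrades the trivial bound $b\le k$ (pair each $F_i$ with $F_i^{T}$) to $b\le k-1$, and I expect it to be the crux of the proof.

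Finally I would verify tightness with two explicit graphs. For the upper bound, any non-complete interval graph --- say $P_3$ with loops --- has $b=1$, while $D(G)$ is not Ferrers (a non-complete symmetric loopy matrix always contains an identity $2\times2$ block), forcing $k=2=b+1$. For the lower bound, I would take $C_4$ with loops: here $\BOX{C_4}=2$, and a direct check shows that any two of its four off-diagonal $0$'s force a $1$-entry into any common Ferrers zero set, so each $0$ needs a private Ferrers digraph and $k=4=2b$. The only labor I anticipate beyond the transpose-pair fact is these extremal computations; manufacturing witnesses with $b=k-1$ or $b=k/2$ for larger $k$ should be a routine matter of stacking similar gadgets.
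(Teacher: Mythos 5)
Your proof is correct, and its skeleton coincides with the paper's: the lower bound is the same $2b$-fold decomposition $G=\bigcap_i (F_i\cap F_i^T)$ (your detour through Theorem \ref{thm:alpha} just re-runs Observation \ref{obs:alpha} factorwise), and your isolated transpose-pair fact is exactly the paper's observation that a Ferrers digraph $F_k$ with all loops has $F_k\cup F_k^T$ complete, which both proofs use to discard the $k$-th factor. The differences are in execution and in the extremal witnesses. Your entrywise redundancy argument (every off-diagonal zero of $G$ is already killed by some $I_i$ with $i\leqslant k-1$, else $F_k$ would miss both $uv$ and $vu$) is a cleaner rendering of the paper's inclusion chain, whose printed form is garbled: the paper literally asserts ``$F_k^T\cap F_k=\emptyset$'' and ``$F_k^T\subseteq G$'', where the correct statements are $\overline{F_k}\cap\overline{F_k}^T=\emptyset$ (i.e.\ $F_k\cup F_k^T$ complete) and $F_1^T\cap\cdots\cap F_{k-1}^T\subseteq F_k$; your version is checkable line by line and silently repairs this. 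For tightness of the lower bound you use the paper's own example, $C_4$ with $b=2$, $k=4$ (your couple-counting is the paper's $H(D(C_4))=K_4$ argument in disguise), but for the upper bound you substitute $P_3$ with loops ($b=1$, $k=2$) for the paper's $C_6$ ($b=2$, $k=3$). Your $P_3$ example is valid and cheaper --- $k\geqslant 2$ since $D(P_3)$ is symmetric, loopy and non-complete, and $k\leqslant 2b=2$ --- but it is degenerate in that at $k=2$ the bounds $k/2$ and $k-1$ coincide, so the paper's $C_6$ is the more informative witness, showing the upper bound stays tight when it strictly exceeds the lower one. Finally, your explicit exclusion of complete $G$ (where $b=0$ and the lower bound fails under the paper's conventions) is a point of care the paper omits.
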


\begin{proof}
Since $b = \BOX{G}$, $G$ can be expressed as $G = I_1 \cap I_2 \cap \cdots \cap I_b$, where each $I_i$ is an interval graph and so $I_i=F_i\cap F^T_i$ for some Ferrers digraphs (with loop at every vertex) for $i=1,2,\ldots ,b$. Then $D(G)= (F_1 \cap {F_1}^T) \cap (F_2 \cap {F_2}^T) \cap \cdots \cap (F_b \cap {F_b}^T)$ which implies $k=d_F(D(G))\leqslant 2b$, i.e., $\frac{k}{2}\leqslant b$. The limit is reached in the case of $G=C_4$ as $\BOX{C_4}=2$ and from the following adjacency matrix of $D=D(C_4)$ it is clear that $H(D)=K_4$ and hence $d_F(D(C_4))=4$.

$$\begin{array}{cc}
\includegraphics[scale=0.4]{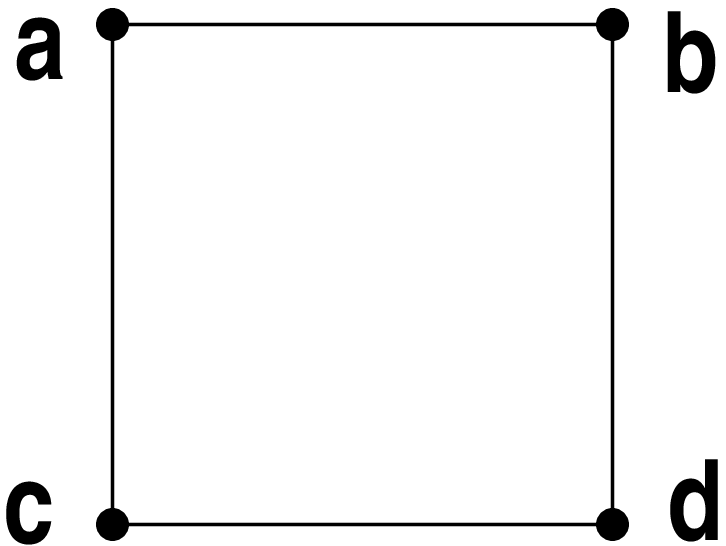} & \hspace{1.4in} 
\begin{array}[b]{c|cccc}
\multicolumn{1}{c}{} & a & b & c & d \\ \cline{2-5}
a & 1 & 1 & 1 & 0 \\
b & 1 & 1 & 0 & 1 \\
c & 1 & 0 & 1 & 1 \\
d & 0 & 1 & 1 & 1 
\end{array}
\end{array}$$

\vspace{1em}As for the upper bound, let $d_F(D(G))=k$. Then $D=D(G) = F_1 \cap F_2 \cap \cdots \cap F_k$. Since $D$ is symmetric, $D=D^T=D\cap D^T=G=(F_1 \cap F_2 \cap \cdots \cap F_k) \cap ({F_1}^T \cap {F_2}^2 \cap \cdots \cap {F_k}^T)=(F_1 \cap {F_1}^T) \cap (F_2 \cap {F_2}^T) \cap \cdots \cap (F_{k} \cap {F_{k}}^T)$. Now $F_k^T\subseteq G=F_1 \cap F_2 \cap \cdots \cap F_k$. Also $F_k^T\cap F_k=\emptyset$ as $F_k$ has loops at all its vertices and hence $F_k\cup F_k^T$ is complete. So $F_k^T\subseteq F_1 \cap F_2 \cap \cdots \cap F_{k-1}$ which implies $F_k\subseteq {F_1}^T \cap {F_2}^2 \cap \cdots \cap {F_{k-1}}^T$. Thus $G=(F_1 \cap {F_1}^T) \cap (F_2 \cap {F_2}^T) \cap \cdots \cap (F_{k-1} \cap {F_{k-1}}^T)=I_1 \cap I_2 \cap \cdots \cap I_{k-1}$ where $I_i=F_i\cap F_i^T$ for $i=1,2,\ldots ,k-1$. Since each $F_i$ has loop at every vertex, we have each $I_i$ is an interval graph by Observation \ref{obs:alpha}. Therefore $\BOX{G}\leqslant k-1$.

\vspace{1em} This limit is reached for $G=C_6$ (the cycle of length 6). Since $C_6$ is not an interval graph, but it can be easily obtained as an intersection graph of $2$-dimensional boxes, we have $\BOX{C_6}=2$. Now $D(C_6)=F_1\cap F_2\cap F_3$ where $F_i,\ i=1,2,3$, are Ferrers digraphs as represented below:

\vspace{1em}\noindent {\small $$\begin{array}{c|cccccc}
\multicolumn{1}{c}{} & a & b & c & d & e & f \\ \cline{2-7}
a & 1 & 1 & 0 & 0 & 0 & 1 \\
b & 1 & 1 & 1 & 0 & 0 & 0 \\
c & 0 & 1 & 1 & 1 & 0 & 0 \\
d & 0 & 0 & 1 & 1 & 1 & 0 \\
e & 0 & 0 & 0 & 1 & 1 & 1 \\
f & 1 & 0 & 0 & 0 & 1 & 1 
\end{array} = 
\begin{array}{c|cccccc}
\multicolumn{1}{c}{} & a & b & f & c & e & d \\ \cline{2-7}
a & 1 & 1 & 1 & 0 & 0 & 0 \\
b & 1 & 1 & 1 & 1 & 0 & 0 \\
f & 1 & 1 & 1 & 1 & 1 & 0 \\
c & 1 & 1 & 1 & 1 & 1 & 1 \\
e & 1 & 1 & 1 & 1 & 1 & 1 \\
d & 1 & 1 & 1 & 1 & 1 & 1 
\end{array} \bigcap 
\begin{array}{c|cccccc}
\multicolumn{1}{c}{} & c & d & b & e & a & f \\ \cline{2-7}
c & 1 & 1 & 1 & 0 & 0 & 0 \\
d & 1 & 1 & 1 & 1 & 0 & 0 \\
b & 1 & 1 & 1 & 1 & 1 & 0 \\
e & 1 & 1 & 1 & 1 & 1 & 1 \\
a & 1 & 1 & 1 & 1 & 1 & 1 \\
f & 1 & 1 & 1 & 1 & 1 & 1\end{array} \bigcap
\begin{array}{c|cccccc}
\multicolumn{1}{c}{} & b & c & a & d & f & e \\ \cline{2-7}
b & 1 & 1 & 1 & 1 & 1 & 1 \\
c & 1 & 1 & 1 & 1 & 1 & 1 \\
a & 1 & 1 & 1 & 1 & 1 & 1 \\
d & 0 & 1 & 1 & 1 & 1 & 1 \\
f & 0 & 0 & 1 & 1 & 1 & 1 \\
e & 0 & 0 & 0 & 1 & 1 & 1 
\end{array}$$}

\noindent So $d_F(D(C_6))\leqslant 3$. Again $H(M)=K_3$ where $M$ the following submatrix of $A(D(C_6))$:
$$\begin{array}{cc}
\begin{array}[b]{cc|ccc}
\multicolumn{2}{c}{} & b & f & d \\ \cline{3-5}
 & a & 1 & 1 & 0 \\
M\ = & c & 1 & 0 & 1 \\
 & e & 0 & 1 & 1  
\end{array} & \hspace{1.3in}\includegraphics[scale=0.4]{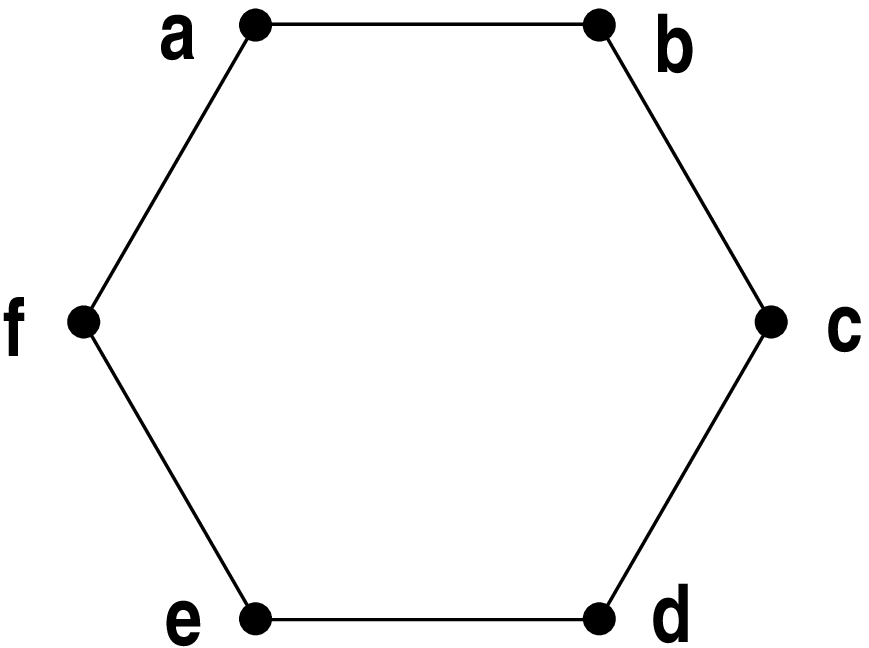}
\end{array}$$

\noindent Therefore $d_F(D(C_6))\geqslant 3$ and hence $d_F(D(C_6))=3$, as required.
\end{proof}

\section{A construction to determine the Ferrers dimension of a directed graph}

Let $D$ be a digraph and $H(D)$ be the graph associated to $D$. The following example shows that not every color class in a given coloring of $H(D)$ forms a Ferrers digraph.

\begin{exmp}
Let us consider the digraph $D$ whose adjacency matrix $A(D)$ is given below:
$$\begin{array}{c|ccc}
\multicolumn{1}{c}{} & a & b & c \\ \cline{2-4}
a & 1 & 0 & 0 \\
b & 0 & 1 & 0 \\
c & 0 & 0 & 1
\end{array}$$
The associated graph $H(D)$ is given by:
\begin{figure}[h]
\begin{center}
\includegraphics*[scale=0.5]{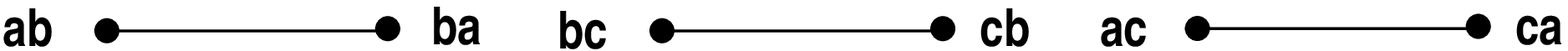}
\end{center}
\end{figure}

\noindent Consider the $2$-coloring of $H(D)$ with color classes $\set{ab,bc,ca}$ and $\set{ba,cb,ac}$. Now zeros of $A(D)$ corresponding to the color class $\set{ab,bc,ca}$,  forms the digraph whose adjacency matrix,
$$\begin{array}{c|ccc}
\multicolumn{1}{c}{} & a & b & c \\ \cline{2-4}
a & 1 & 0 & 1 \\
b & 1 & 1 & 0 \\
c & 0 & 1 & 1
\end{array}$$
shows that it is not a Ferrers digraph.
\end{exmp}

Thus it is clear that if a color class has to correspond a Ferrers digraph, it must contain all the zeros, which, among themselves, ensure the absence of couples. More precisely, if zeros $ab$ and $cd$ ($ab \neq cd$) are in the same color class, either $ad$ or $cb$ or both must also be in that same color class. In view of this observation, we modify the construction of Cogis and introduce the directed graph $J(D)$ instead of the undirected graph $H(D)$ corresponding to a digraph $D$.

\begin{defn}Let $D = (V, E)$ be a digraph. We define a digraph $J(D)$ with vertex set $E$ (i.e. the arcs of $D$) and there is an arc from $ab \in E$ to $cd \in E$ if and only if $ab \neq cd$ and $ad \in E$. \footnote{$a,b,c,d$ may not be all distinct.}
\end{defn}

It is clear that for any subdigraph $H$ of $D$, $J(H)$ is also a subdigraph of $J(D)$. Also $J(H)$ becomes an induced one whenever $H$ is an induced subdigraph of $D$.

\begin{exmp}Let us consider the following digraph $D$. Then the corresponding $J(D)$ is obtained as follows:
\begin{figure}[h]
\begin{center}
\includegraphics*[scale=0.4]{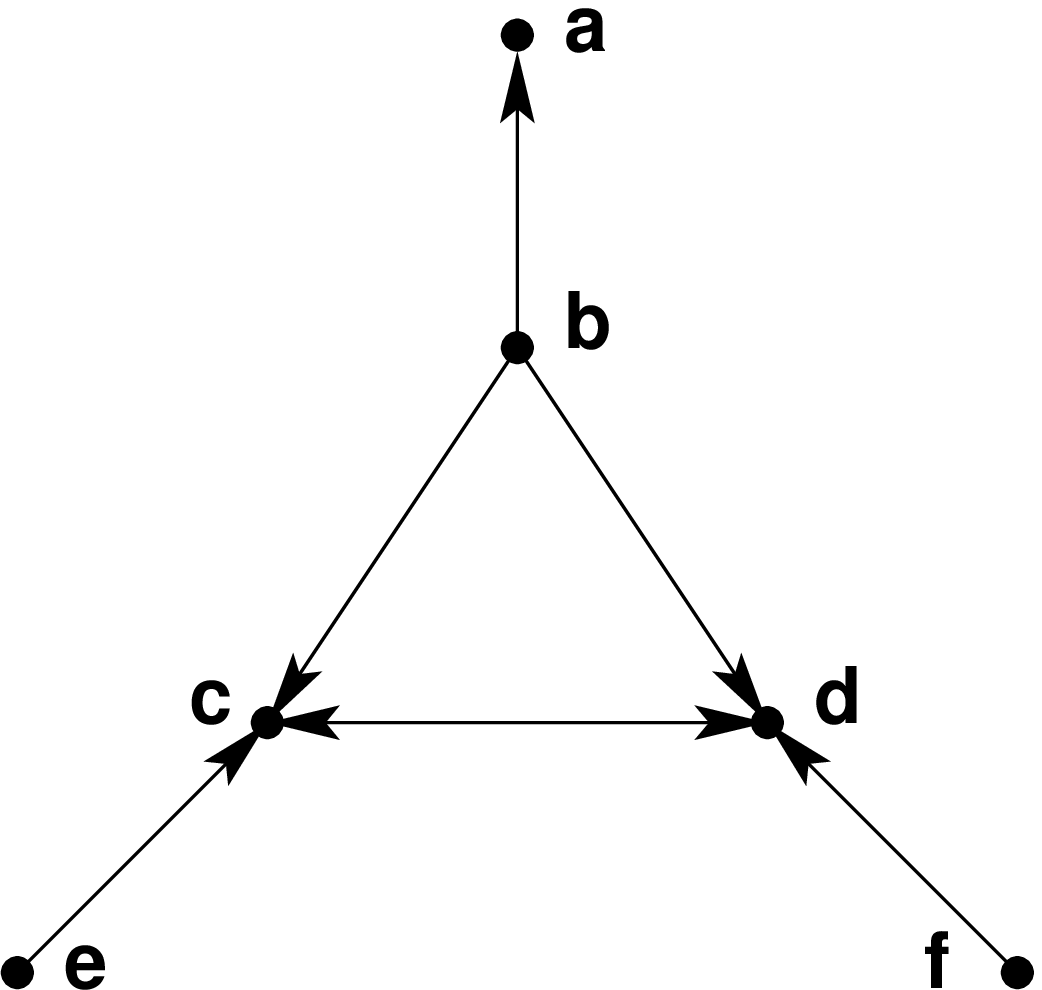}\hspace{1in}
\includegraphics*[scale=0.4]{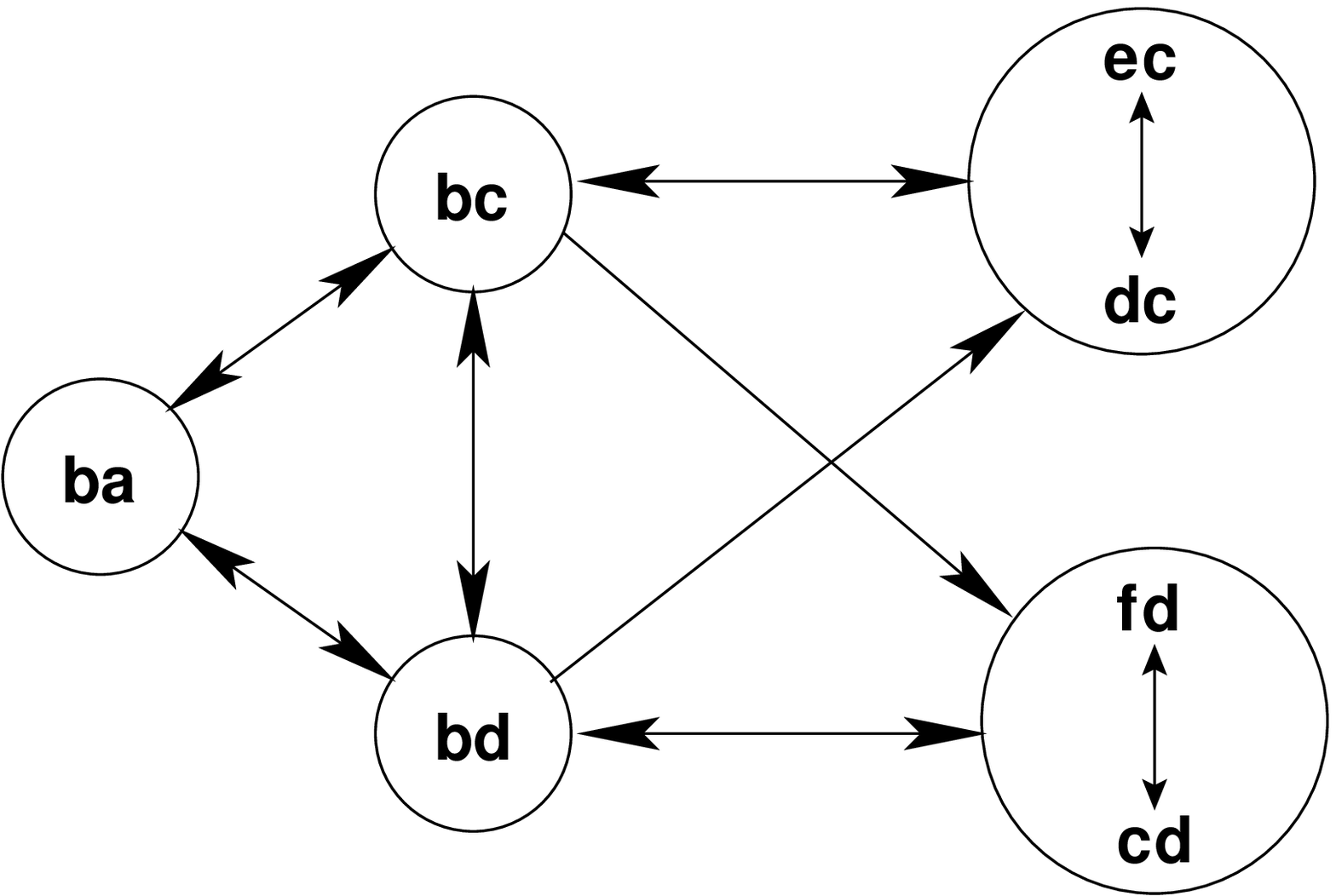}

$D$ \hspace{3in} $J(D)$
\end{center}
\end{figure}
\end{exmp}

\vspace{-2em} Certainly not all induced subdigraphs of $J(D)$ are of the form $J(H)$ for some subdigraph $H$ of $D$. For example, the induced subdigraph of $J(D)$ with the vertex set $\set{bc, ec, fd}$ is not of the form $J(H)$ for any subdigraph $H$ of $D$.

\begin{defn} A subdigraph $S$ of $J(D)$ with vertex set $V(S)$ is called an {\em ideal} subdigraph if
$$ab \longrightarrow cd \textrm{ in } S\ \Longrightarrow\ ab \neq cd\ \textrm{ and }\ ad \in V(S).$$
We note that for any subdigraph $H$ of $D$, $J(H)$ is an ideal subdigraph of $J(D)$.
\end{defn}

\begin{defn} An ideal subdigraph $S$ of $J(D)$ is called {\em total} if for any $ab\neq cd$ in $V(S)$, we have $ab\longrightarrow cd$ or $cd \longrightarrow ab$ or both (i.e., $ab\longleftrightarrow cd$). Let $D=(V,E)$. Then the {\em total covering number} of $J(D)$ is the minimum number of total subdigraphs of $J(D)$ needed to cover $E$, i.e., the vertex set of $J(D)$.
\end{defn}

For a digraph $D$, it is not known (as we mentioned above) that whether $d_F(D)=\chi(H(D))$, i.e., the clique covering number of $\overline{H(D)}$. But we have the following result:
\begin{thm} \label{thm:beta2}
Let $D$ be a digraph. Then the Ferrers dimension of $D$ is equal to the total covering number of $J(\overline D)$.
\end{thm}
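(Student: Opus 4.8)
The plan is to establish the equality $d_F(D) = \text{(total covering number of } J(\overline{D}))$ by proving two inequalities, exploiting the fact that arcs of $\overline{D}$ correspond exactly to the $0$'s of $A(D)$, so that covering the vertex set of $J(\overline{D})$ by total subdigraphs should mirror expressing $D$ as an intersection of Ferrers digraphs. The central bridge I would set up first is a precise dictionary: a Ferrers digraph $F \supseteq D$ in an intersection decomposition is determined by which $0$'s of $A(D)$ it ``keeps'' as $0$'s, and the condition that $F$ be Ferrers (no couple among its $0$'s, by Riguet's criterion quoted in the introduction) is exactly the condition that the kept $0$'s be closed under the couple relation. I would translate this ``no couple'' condition into the language of $J(\overline{D})$: recall an arc $ab \to cd$ in $J(\overline{D})$ exists iff $ab \neq cd$ and $ad$ is an arc of $\overline{D}$, i.e.\ iff $ad$ is a $0$ of $A(D)$. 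The key observation is that $ab$ and $cd$ (two $0$'s of $A(D)$) form a couple precisely when \emph{neither} $ad$ \emph{nor} $cb$ is a $0$ of $A(D)$ — equivalently, when there is no arc $ab \to cd$ and no arc $cd \to ab$ in $J(\overline{D})$. Hence a set of $0$'s contains no couple if and only if, as a vertex set in $J(\overline{D})$, every pair is joined by an arc in at least one direction — which is exactly the defining property of a \emph{total} subdigraph.

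For the first inequality, $d_F(D) \ge$ total covering number, I would start from an optimal Ferrers decomposition $D = F_1 \cap \cdots \cap F_n$ with $n = d_F(D)$. Each $F_i$ contains $D$, so the $0$'s of $A(F_i)$ form a subset $Z_i$ of the $0$'s of $A(D)$, and since every $0$ of $A(D)$ must be a $0$ of some $F_i$ (as $D = \bigcap F_i$), the sets $Z_1, \dots, Z_n$ cover all the $0$'s, i.e.\ cover $V(J(\overline{D})) = E(\overline{D})$. By the dictionary above, since each $F_i$ is Ferrers its $0$-set $Z_i$ contains no couple, so the subdigraph of $J(\overline{D})$ induced on $Z_i$ is total. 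This exhibits $n$ total subdigraphs covering the vertex set, giving total covering number $\le n$. The one technical point to handle carefully here is that I must use the induced (ideal) subdigraph on $Z_i$ and verify it is genuinely total, i.e.\ that whenever two $0$'s lie in $Z_i$ the arc is present — this is immediate from the couple-freeness translation, but I would spell it out.

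For the reverse inequality, $d_F(D) \le$ total covering number, I would take total subdigraphs $S_1, \dots, S_m$ covering $V(J(\overline{D}))$ with $m$ minimal, and for each $j$ construct a Ferrers digraph $F_j \supseteq D$ whose $0$-set is exactly $V(S_j)$ (set $A(F_j)$ to have $0$ precisely at the positions in $V(S_j)$ and $1$ elsewhere). Because $S_j$ is total, $V(S_j)$ is couple-free, so $A(F_j)$ contains no $2\times 2$ permutation submatrix and $F_j$ is Ferrers by Riguet's criterion. Since the $V(S_j)$ cover all the $0$'s of $A(D)$, a position is $0$ in $\bigcap_j F_j$ iff it is $0$ in some $F_j$ iff it is a $0$ of $A(D)$; hence $D = F_1 \cap \cdots \cap F_m$ and $d_F(D) \le m$. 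Combining the two inequalities yields equality.

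The step I expect to be the main obstacle — or at least the one demanding the most care — is the couple-freeness translation and in particular the role of the \emph{total} (rather than merely ideal) condition. The subtlety is that $J(\overline{D})$ records the relation ``$ad \in E(\overline{D})$'' asymmetrically, so I must check that the absence of \emph{both} directed arcs between $ab$ and $cd$ corresponds symmetrically to $ab, cd$ forming a couple of $A(D)$; writing out the four entries $ab, cd, ad, cb$ and matching them against both the $\left(\begin{smallmatrix}1&0\\0&1\end{smallmatrix}\right)$ and $\left(\begin{smallmatrix}0&1\\1&0\end{smallmatrix}\right)$ patterns is where an off-by-one in the row/column indexing could slip in. A secondary point worth isolating is the degenerate case where $a,b,c,d$ are not all distinct (flagged in the footnote to the definition of $J(D)$): I would confirm that loops and repeated vertices do not spuriously create or destroy couples, so that the correspondence remains exact in all cases.
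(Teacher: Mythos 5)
Your overall strategy coincides with the paper's: identify the zeros of $A(D)$ with the vertices of $J(\overline D)$, turn an optimal Ferrers decomposition into a covering by total subdigraphs, and conversely build Ferrers digraphs from a total covering via Riguet's couple criterion. But there is a genuine flaw at exactly the point you flagged as the main obstacle, and your resolution of it is wrong: ideality. An arc $ab \to cd$ of $J(\overline D)$ only certifies that $ad$ is a zero of $A(D)$ --- not a zero of $A(F_i)$. Consequently the subdigraph of $J(\overline D)$ \emph{induced} on $Z_i$ (the zero set of $A(F_i)$) need not be ideal: it can contain an arc $ab \to cd$ whose witness $ad$ is a zero of $A(D)$ lying outside $Z_i$ (possible since $D \subseteq F_i$). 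Since the paper defines ``total'' only for \emph{ideal} subdigraphs, your induced subdigraph may simply fail to be a total subdigraph, so your first inequality does not go through as written. The repair is the paper's construction: take the (generally non-induced) subdigraph $J(\overline{F_i})$, whose arcs are by definition witnessed inside $Z_i$; it is ideal by construction and total because $F_i$ is Ferrers.

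The same conflation makes your ``dictionary'' false as stated, and in the reverse direction it would be fatal if relied on literally. ``Every pair of vertices of $V(S_j)$ is joined by an arc of $J(\overline D)$'' does \emph{not} imply that the matrix with zeros exactly at $V(S_j)$ is couple-free: if $ab, cd \in V(S_j)$, $ad$ is a zero of $A(D)$ not belonging to $V(S_j)$, and $cb$ is a one of $A(D)$, then the pairwise-arc condition holds (via $ab \to cd$ in $J(\overline D)$), yet $A(F_j)$ has $ab = cd = 0$ and $ad = cb = 1$, a couple, so $F_j$ is not Ferrers. What saves the construction is precisely the ideal condition built into the paper's definition of total: an arc $ab \to cd$ \emph{of $S_j$ itself} forces $ad \in V(S_j)$, hence $ad$ is a zero of $A(F_j)$ and no couple can arise. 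Your reverse inequality is correct only if ``total'' is read as the paper's ideal-plus-pairwise notion and couple-freeness of $V(S_j)$ is derived through ideality; your dictionary, which equates totality with the pairwise-arc condition in $J(\overline D)$ and measures couples against $A(D)$ rather than against $A(F_j)$, is not strong enough. Once ideality is invoked properly in both directions --- $J(\overline{F_i})$ in the first, the ideal property of $S_j$ in the second --- your argument becomes exactly the paper's proof, including the degenerate-case check for $a=c$ or $b=d$, which you correctly isolated and which is harmless since then the witness $ad$ or $cb$ coincides with $ab$ or $cd$ itself.
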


\begin{proof}
Let $d_F(D)=n$ and the total covering number of $J(\overline D)$ be $m$. Then $D = F_1 \cap F_2 \cap \cdots \cap F_n$ for some Ferrers digraphs $F_i$, $i = 1, 2, \ldots , n$. Hence $\overline D = \overline {F_1} \cup \overline {F_2} \cup \cdots \cup \overline {F_n}$. Now we consider the subdigraph $\overline {F_1}$ of $\overline D$. We have, $J(\overline {F_1})$ is an ideal subdigraph of $J(\overline D)$ as $\overline{F_1}$ is a subdigraph of $\overline D$. We claim that $J(\overline {F_1})$ is a total subdigraph of $J(\overline D)$, whence it will follow that $m \leq n$.

\vspace{1em}Let $ab, cd \in E(\overline {F_1})$, $ab \neq cd$. Then $ab, cd \notin E(F_1)$ and so there are zero entries in the positions $ab$ and $cd$ the adjacency matrix of $F_1$. We have the following three cases:
$$\begin{array}{c|cc}
\multicolumn{1}{c}{} & b & d \\ \cline{2-3}
a & 0  &   \\
c &    &  0
\end{array} \hspace{1in}
\begin{array}{c|cc}
\multicolumn{1}{c}{} & b & d \\ \cline{2-3}
c=a & 0  &  0
\end{array} \hspace{1in}
\begin{array}{c|c}
\multicolumn{1}{c}{} & b = d \\ \cline{2-2}
a & 0 \\
c & 0
\end{array}$$
$$a \neq c, b \neq d
\hspace{1in} a = c, b \neq d
\hspace{1in} a \neq c, b = d$$
For the last two cases, $ad=cb=0$ and in the first case, since $F_1$ is a Ferrers digraph, $ad$ or $cb$ (or both) must be equal to zero. Thus $ad$ or $cb$ (or both) $\in V(J(\overline{F_1}))$, which implies $ab \rightarrow cd$ or $cd \rightarrow ab$ (or both) in $J(\overline {F_1})$. Therefore $J(\overline {F_1})$ is a total subdigraph of $J(\overline D)$ and the claim is verified.

\vspace{1em}Next, let $\left\{ S_1, S_2, \ldots, S_m\right\}$ be a total covering of $J(\overline D)$. For each $i = 1, 2, \ldots, m$, we define the subdigraph $F_i$ of $\overline D$ with the vertex set same as that of $D$ and edges which are belonging to the vertex set of $S_i$, i. e., $F_i = (V, V(S_i))$, where $D = (V, E)$. We show that $F_i$ is a Ferrers digraph by the method of contradiction. We assume that $F_i$ is not a Ferrers digraph so that there is a couple
$$\begin{array}{c|cc}
\multicolumn{1}{c}{} & b & d \\ \cline{2-3}
a & 1 & 0 \\
c & 0 & 1
\end{array}$$
in the adjacency matrix of $F_i$. Then $ab, cd \in E(F_i) = V(S_i)$, where $ab \neq cd$. Since $S_i$ is total, we have, $ad$ or $cb$ (or both) must belong to $V(S_i) = E(F_i)$ as $ab \rightarrow cd$ or $cd \rightarrow ab$ in $S_i$. This contradiction proves our assertion. Also since this covering covers all vertices of $J(\overline D)$, i.e., all the edges of $\overline D$, we have $\overline D = F_1 \cup F_2 \cup \cdots \cup F_m$, where each $F_i$ is a Ferrers digraph, so that $D = \overline{F_1} \cap \overline{F_2} \cap \cdots \cap \overline{F_m}$. Finally, since the complement of a Ferrers digraph is again a Ferrers digraph, we have $n \leq m$. This completes the proof.
\end{proof} 

One final remark is that the undirected graph obtained from $J(\overline D)$ by ignoring the directions of the arcs is the same as the complement of the graph $H(D)$. Now the chromatic number of $H(D)$ is the clique covering number of $\overline {H(D)}$, which is less than or equal to the total covering number of $J(\overline D)$ as every total subdigraph of $J(\overline D)$, made undirected by ignoring the direction of the arcs, is a clique in $\overline{H(D)}$, but the converse may not be true.

\vspace{2em}

\end{document}